  \UndeclareTextCommand{\textasteriskcentered}{TS1}
  \DeclareTextSymbolDefault{\textasteriskcentered}{T1}
  \DeclareTextCommand{\textasteriskcentered}{T1}{\ensuremath{*}}
\definecolor{orcidlogocol}{HTML}{A6CE39}
\newcommand{\orcidlogo}{\BeginAccSupp{method=escape,Alt={ORCiD ID}}\begin{tikzpicture}[yscale=-1,transform shape]
\fill[orcidlogocol] svg{M256,128c0,70.7-57.3,128-128,128C57.3,256,0,198.7,0,128C0,57.3,57.3,0,128,0C198.7,0,256,57.3,256,128z};
    \fill[white] svg{M86.3,186.2H70.9V79.1h15.4v48.4V186.2z}
                 svg{M108.9,79.1h41.6c39.6,0,57,28.3,57,53.6c0,27.5-21.5,53.6-56.8,53.6h-41.8V79.1z M124.3,172.4h24.5c34.9,0,42.9-26.5,42.9-39.7c0-21.5-13.7-39.7-43.7-39.7h-23.7V172.4z}
                 svg{M88.7,56.8c0,5.5-4.5,10.1-10.1,10.1c-5.6,0-10.1-4.6-10.1-10.1c0-5.6,4.5-10.1,10.1-10.1C84.2,46.7,88.7,51.3,88.7,56.8z};
\end{tikzpicture}\EndAccSupp{}}
\newcommand\orcidlogosized[1]{\raisebox{-#1/5}{\resizebox{!}{#1}{\orcidlogo}}}
\newcommand\orcid[2][1em]{\mbox{\href{https://orcid.org/#2}{\orcidlogosized{#1}\hspace{\dimexpr #1/2}\nolinkurl{#2}}}}
\newcommand{\SwappedDeclarePairedDelimiter}[3]{%
  \expandafter\DeclarePairedDelimiter\csname Swapped\string#1\endcsname{#2}{#3}%
  \newcommand#1{%
    \@ifstar{\csname Swapped\string#1\endcsname}
            {\@ifnextchar[{\csname Swapped\string#1\endcsname}
                          {\csname Swapped\string#1\endcsname*}%
            }%
  }%
}
\SwappedDeclarePairedDelimiter{\of}{(}{)}
\DeclarePairedDelimiterX\setd[1]\{\}{%

#1
}
\def\set{\setd*}
\SwappedDeclarePairedDelimiter{\abs}{\lvert}{\rvert}
\SwappedDeclarePairedDelimiter{\norm}{\lVert}{\rVert}
\SwappedDeclarePairedDelimiter{\brof}{[}{]}
\SwappedDeclarePairedDelimiter{\floor}{\lfloor}{\rfloor}
\SwappedDeclarePairedDelimiter{\ceil}{\lceil}{\rceil}
\SwappedDeclarePairedDelimiter{\inner}{\langle}{\rangle}
\DeclarePairedDelimiterX{\diverg}[2]{(}{)}{#1\;\delimsize\|\;\mathopen{}#2}
\def\diver{\diverg*}
\DeclareMathOperator*{\Exp}{E}
\newcommand{\E}[2][]{\Exp_{#1}\brof{#2}}
\newcommand{\PR}[2][]{\Pr_{#1}\brof{#2}}
\DeclarePairedDelimiterX{\condpar}[2]{[}{]}{#1\,\delimsize\vert\,\mathopen{}#2}
\DeclarePairedDelimiterX{\condbr}[2]{[}{]}{#1\,\delimsize\vert\,\mathopen{}#2}
\newcommand{\CE}[3][]{\Exp_{#1}\condbr*{#2}{#3}}
\newcommand{\eps}{\varepsilon}
  \newcommand{\defeq}{\eqdef}
    \newcommand{\defeq}{\mathrel{\overset{\makebox[0pt]%
      {\mbox{\normalfont\tiny\sffamily def}}}{=}}}
    \newcommand{\defeq}{\symbol{"225D}}
\DeclareMathOperator{\Var}{Var}
\theoremstyle{plain}
\newtheorem{thm}{Theorem}[section]
\Crefname{thm}{Theorem}{Theorems}
\newtheorem{prop}[thm]{Proposition}
\Crefname{prop}{Proposition}{Propositions}
\newtheorem{lem}[thm]{Lemma}
\Crefname{lem}{Lemma}{Lemmas}
\Crefname{cor}{Corollary}{Corollaries}
\newtheorem{conj}[thm]{Conjecture}
\theoremstyle{definition}
\newtheorem{defn}[thm]{Definition}
\theoremstyle{remark}
\newtheorem{rmrk}{Remark}
\numberwithin{equation}{section}
\newcommand{\V}[1]{V_{#1}}
\newcommand{\X}{X}
\newcommand{\B}{B}
\newcommand{\Eexp}[1]{\E{\exp\of{#1}}}
\DeclareMathOperator{\Binom}{Binomial}
\DeclareMathOperator{\divop}{D}
\DeclareMathOperator{\KL}{\divop}
\DeclareMathOperator{\BKL}{\divop}
\DeclareRobustCommand{\stirling}{\genfrac\{\}{0pt}{}}
\newcommand{\doi}[1]{\href{https://doi.org/#1}{\nolinkurl{DOI:#1}}}
\title{Finite-Sample Concentration of the Multinomial in Relative
  Entropy\footnote{Author's version of paper
    published in the IEEE Transactions on Information Theory,
    \doi{10.1109/TIT.2020.2996134}.}}
\author{Rohit Agrawal\thanks{
    Harvard John A. Paulson
    School of Engineering and Applied Sciences, Cambridge, MA 02138 USA.
    \orcid{0000-0001-5563-7402}.
    This work was supported in part by the National Science
    Foundation (NSF) under Grant CCF-1763299 to Salil Vadhan, and in part by
    the Department of Defense (DoD) through the National Defense Science and
    Engineering Graduate Fellowship (NDSEG) Program.
  }}
\begin{document}
\maketitle
\begin{abstract}
  We show that the moment generating function of the Kullback--Leibler
  divergence (relative entropy) between the empirical distribution of
  $n$ independent samples from a distribution $P$ over a finite alphabet
  of size $k$ (i.e.~a multinomial distribution) and $P$ itself is no
  more than that of a gamma distribution with shape $k - 1$ and rate
  $n$. The resulting exponential concentration inequality becomes
  meaningful (less than 1) when the divergence $\eps$ is larger than
  $(k-1)/n$, whereas the standard method of types bound requires $\eps >
  \frac{1}{n} \cdot \log{\binom{n+k-1}{k-1}} \geq (k-1)/n \cdot \log(1 +
  n/(k-1))$, thus saving a factor of order $\log(n/k)$ in the standard
  regime of parameters where $n\gg k$. As a consequence, we also obtain
  finite-sample bounds on all the moments of the empirical divergence
  (equivalently, the discrete likelihood-ratio statistic),
  which are within constant factors (depending on the moment) of their
  asymptotic values.
  Our proof proceeds via a simple
  reduction to the case $k = 2$ of a binary alphabet (i.e.~a binomial
  distribution), and has the property that improvements in the case of
  $k = 2$ directly translate to improvements for general $k$. In
  particular, we conjecture a bound on the binomial moment generating
  function that would almost close the quadratic gap between our
  finite-sample bound and the asymptotic moment generating function
  bound from Wilks' theorem (which does not hold for finite samples).

  \textbf{Keywords: }
  Concentration inequalities, empirical distributions, Kullback--Leibler
  divergence, likelihood-ratio test, binomial tail bounds
\end{abstract}

\section{Introduction}

A key problem in statistics is to understand the rate of convergence of
an empirical distribution of independent samples to the true underlying
distribution. Indeed, this convergence is the basis of hypothesis
testing and statistical inference in general \cite{pitman_basic_1979}.
For the case of discrete distributions over a finite alphabet, the
Neyman--Pearson lemma \cite{ney_pea_problem_1933} shows that for optimal
hypothesis testing it is important to consider the
\emph{likelihood-ratio statistic}, or equivalently
\cite{har_tus_information_2012}, the Kullback--Leibler divergence
(relative entropy) from the true distribution to the empirical
distribution, as formally defined in \cref{intro_defn:multidiv}:

\begin{defn}\label{intro_defn:multidiv}
Let $\X = (\X_1, \dotsc, \X_k)$ be distributed according to a
multinomial distribution with $n$ samples and probabilities $P =
(p_1,\dotsc, p_k)$, and define
\[
  \V{n,k,P}
  \defeq
  \KL\diverg[\Big]{\of{\X_1/n,\dotsc, \X_k/n}}{\of{p_1, \dotsc, p_k}}
\]
where
\begin{align*}
  \KL\diverg[\Big]{\of{q_1,\dotsc, q_k}}{\of{p_1, \dotsc, p_k}}
  \defeq \sum_{i=1}^k q_i \log \frac{q_i}{p_i}
\end{align*}
is the Kullback--Leibler divergence between two probability
distributions on a finite set $\set{1, \dotsc, k}$ (represented as
probability mass functions), and $\log$ is in the natural base (as are
all logarithms and exponentials in this work). The
likelihood-ratio statistic is $2n\V{n,k,P}$ \cite{har_tus_information_2012}.
\end{defn}

In this language, the Neyman--Pearson lemma states that the uniformly most powerful
hypothesis test for significance $\alpha$ rejects a hypothesis
$P=(p_1,\dots,p_k)$ if and only if $\V{n,k,P}$ is at least
$\eps_\alpha$, where $\eps_\alpha$ is such that
$\PR{\V{n,k,P}\geq\eps_{\alpha}}\leq\alpha$.  To apply this test in
practice an upper bound on $\eps_{\alpha}$ is needed, so to maximize the
power of a provably correct finite-sample test we seek upper bounds on
$\PR{\V{} \geq \eps}$ which are meaningful (less than $1$) for $\eps$ as
small as possible. Equivalently, tight control on $\eps$ reduces the
number of samples needed to obtain a given level of significance, which
is of importance in areas as disparate as high-dimensional statistics
\cite{wainwright_highdimensional_2019}, combinatorial constructions in
complexity theory \cite{agrawal_samplers_conference}, and private
machine learning \cite{dia_wan_cal_san_robustness_2020}.

In this work, we focus on tail bounds for $\PR{\V{n,k,P}\geq\eps}$ which decay
exponentially for small $\eps$, ideally when $\eps\approx
\E{\V{n,k,P}}$. Paninski \cite{paninski_estimation_2003} showed that
$\E{\V{n,k,P}} \leq \log\of{1 + \frac{k-1}n} \leq \frac{k-1}n$, and
conversely Jiao et al.\ \cite{jia_ven_han_wei_maximum_2017} showed that
for $P$ the uniform distribution and large enough $n$ that
$\E{\V{n,k,U_k}} \geq \frac{k-1}{n}\cdot \frac12$, so in general the
smallest $\eps$ for which one can expect a meaningful bound is of order
$(k-1)/n$.  In this work, we derive the first tail bound decaying
exponentially in $\eps$ for $\eps$ as small as $(k-1)/n$, whereas existing
bounds either require $\eps$ to be at least order $(k-1)/n\cdot
\log(n/k)$ when $k < n$
(\cite{csiszar_method_1998,mar_jia_tan_now_wei_concentration_2019}) or
work only for the uniform distribution and
decay exponentially in $\eps^2$ (\cite{ant_kon_convergence_2001}), which
when $\eps < 1$ is significantly weaker than decay in $\eps^1$. Formally,
our result is as follows:
\begin{thm}\label{intro_thm:tail}
  Let $\V{n,k,P}$ be as in \cref{intro_defn:multidiv}. Then for all $\eps >
  \frac{k-1}{n}$, it holds that
  \[
    \PR{\V{n,k,P} \geq \eps} \leq e^{-n\eps} \cdot
    \of{\frac{e\eps n}{k - 1}}^{k-1}.
  \]
\end{thm}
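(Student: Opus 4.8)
The plan is to obtain the tail bound as a Chernoff bound from a moment generating function estimate, so the first step is to reduce \cref{intro_thm:tail} to the claim that for all $0 \le t < n$,
\[
  \Eexp{t\V{n,k,P}} \le \of{1 - t/n}^{-(k-1)},
\]
i.e.\ that the MGF of $\V{n,k,P}$ is dominated by that of a $\mathrm{Gamma}(k-1,n)$ random variable. Granting this, the Chernoff bound—Markov's inequality applied to $e^{t\V{n,k,P}}$—gives $\PR{\V{n,k,P}\ge\eps}\le e^{-t\eps}\of{1-t/n}^{-(k-1)}$ for every such $t$, and I would then minimize the exponent $-t\eps-(k-1)\log(1-t/n)$ over $t$. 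Differentiating shows the minimizer is $t^\ast = n - (k-1)/\eps$, which lies in $(0,n)$ exactly under the hypothesis $\eps > \frac{k-1}{n}$; substituting $t^\ast$ collapses the right-hand side to $e^{-n\eps}\of{e\eps n/(k-1)}^{k-1}$, matching the theorem. This part is a routine one-variable calculation once the MGF bound is in hand.

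The substance is therefore the MGF bound, which I would prove by induction on $k$ via a reduction to the binary case $k=2$. The engine is the chain rule for relative entropy applied to the partition of $\set{1,\dots,k}$ into $\set{1}$ and $\set{2,\dots,k}$: writing $q_i = \X_i/n$, one obtains
\[
  \V{n,k,P} = B + \frac{n-\X_1}{n}\,W,
\]
where $B = \KL\diverg{\of{\X_1/n, 1-\X_1/n}}{\of{p_1, 1-p_1}}$ is exactly the binary divergence with $\X_1 \sim \Binom(n,p_1)$, and, conditioned on $\X_1 = m$, the term $W$ is distributed as $\V{n-m,k-1,P'}$ for the renormalized tail distribution $P' = (p_2,\dots,p_k)/(1-p_1)$. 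The crucial feature is the weight $\frac{n-\X_1}{n}$: conditioning on $\X_1 = m$ and applying the inductive MGF bound for $k-1$ with $N = n-m$ samples at parameter $s = t(n-m)/n$, the factor $(1 - s/N)^{-(k-2)} = (1 - t/n)^{-(k-2)}$ comes out independent of $m$, precisely because the reduced sample count $N = n-m$ and the shrunken weight $(n-m)/n$ cancel. Taking the expectation over $\X_1$ then leaves $\Eexp{t\V{n,k,P}} \le (1-t/n)^{-(k-2)}\cdot\Eexp{tB}$, so the induction closes provided the base case $\Eexp{tB}\le (1-t/n)^{-1}$ holds.

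The main obstacle is this base case: the binomial MGF bound $\Eexp{t\V{n,2,P}} \le (1-t/n)^{-1}$ for $0 \le t < n$. Here no further reduction is available, and one must confront the binomial sum $\sum_{j=0}^n \binom{n}{j}p^j(1-p)^{n-j} e^{tB_j}$ directly, where $B_j = \KL\diverg{\of{j/n, 1-j/n}}{\of{p,1-p}}$. The naive route of inserting the method-of-types estimate $\binom{n}{j}p^j(1-p)^{n-j}\le e^{-nB_j}$ and bounding $\sum_j e^{-(n-t)B_j}$ is too lossy near $t=0$, where that sum already exceeds the correct value $1$; so the prefactor $\binom{n}{j}(j/n)^j((n-j)/n)^{n-j}$ must be retained and the sum controlled sharply—for instance by exploiting the convexity of $j\mapsto B_j$ to compare against the geometric-type series $\sum_m (t/n)^m$, equivalently establishing the moment bounds $\E{B^m}\le m!/n^m$. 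I expect this binomial estimate to carry essentially all of the analytic difficulty, while everything above it—the chain-rule recursion and the Chernoff optimization—is clean and dimension-free by design, so that any sharpening of the $k=2$ bound propagates immediately to all $k$.
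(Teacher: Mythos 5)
Your Chernoff step and your chain-rule reduction to the binary case are exactly the paper's argument: the optimization at $t^\ast = n-(k-1)/\eps$ is correct, and the observation that the weight $(n-\X_1)/n$ cancels against the reduced sample count so that the inductive factor $(1-t/n)^{-(k-2)}$ comes out independent of the conditioning value is precisely \cref{thm:multi_bin_reduction}. But the proposal stops short of proving the one statement that carries the content, namely the base case $\Eexp{t\V{n,2,(p,1-p)}}\leq (1-t/n)^{-1}$. You correctly identify that the method-of-types estimate is too lossy and that it would suffice to show $\E{\of{n\V{n,2,(p,1-p)}}^m}\leq m!$ for all $m$, but you give no argument for that moment bound, and convexity of $j\mapsto \KL\diverg{\of{j/n,1-j/n}}{\of{p,1-p}}$ by itself does not obviously produce it (note the comparison must be uniform in $p$, including $p$ near $0$ or $1$ where the divergence is unbounded). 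As written, the proof has a genuine gap at its load-bearing step.

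For reference, the paper's device for the base case is different from anything you sketch: writing the summand as $\PR{\B_{n,p}=i}^{1-x}\PR{\B_{n,i/n}=i}^{x}$ with $x=t/n$, one uses log-concavity of $q\mapsto\binom{n}{i}q^i(1-q)^{n-i}$ to bound each term by $\PR{\B_{n,(1-x)p+ix/n}=i}$, and then shows by an induction on $n$ that the resulting sum is the polynomial $\sum_{i=0}^n \frac{n!}{n^i(n-i)!}x^i$, independent of $p$, whose coefficients are at most $1$; summing the geometric series gives $1/(1-x)$. (This computation in fact yields exactly the coefficient-wise bound $\E{(n\V{})^m}/m!\leq 1$ in spirit, so your instinct about what needs to be true is right --- but the proof of it is the whole difficulty, and it is missing.)
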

\Cref{intro_thm:tail} is in fact an immediate corollary of our main
technical result, which is a bound on the moment generating function of
$\V{n,k,P}$.
\begin{thm}\label{intro_thm:mgf}
  Let $\V{n,k,P}$ be as in \cref{intro_defn:multidiv}. Then for all
  $0\leq t < n$ it holds that
  \[
    \E{\exp\of[\big]{t\cdot \V{n,k,P}}} \leq
    \of{\frac{1}{1-t/n}}^{k-1}.
  \]
\end{thm}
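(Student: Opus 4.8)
The plan is to reduce the claim for general $k$ to the single case $k=2$ of a binomial, using the sequential structure of the multinomial together with the chain rule for relative entropy. Write $\bar p_i = p_i + \dots + p_k$ and $S_i = \X_i + \dots + \X_k$ for the true and empirical masses of the block $\{i,\dots,k\}$, and set $a_i = p_i/\bar p_i$, so that $S_1 = n$ and $a_1 = p_1$. A direct telescoping computation gives the exact pointwise identity
\[
  \V{n,k,P} = \sum_{i=1}^{k-1}\frac{S_i}{n}\,d_i,
  \qquad
  d_i \defeq \V{S_i,2,(a_i,1-a_i)},
\]
where $d_i$ is the binary divergence between the empirical split $(\X_i/S_i,\,S_{i+1}/S_i)$ of the $S_i$ block-samples and the conditional probabilities $(a_i,1-a_i)$. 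The key probabilistic fact is the standard sequential decomposition of the multinomial: conditioned on $\mathcal{F}_{i-1} = \sigma(\X_1,\dots,\X_{i-1})$ (equivalently, on $S_i$), the count $\X_i$ is distributed as $\Binom(S_i,a_i)$, so that $d_i$ is, conditionally, exactly a binomial divergence on $S_i$ samples.

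Given this, I would establish the bound by peeling off one binary factor at a time. Writing the moment generating function as $\E{\prod_{i=1}^{k-1}\exp\of{t\tfrac{S_i}{n}d_i}}$ and conditioning on $\mathcal{F}_{k-2}$, all factors with $i\le k-2$ and the quantity $S_{k-1}$ are $\mathcal{F}_{k-2}$-measurable, so the only surviving conditional expectation is $\CE{\exp\of{t\tfrac{S_{k-1}}{n}d_{k-1}}}{\mathcal{F}_{k-2}}$. Applying the $k=2$ bound with sample size $S_{k-1}$ and parameter $t' = tS_{k-1}/n$ gives $(1-t'/S_{k-1})^{-1} = (1-t/n)^{-1}$: the scaling factor $S_{k-1}/n$ cancels exactly, leaving the constant $(1-t/n)^{-1}$ regardless of $S_{k-1}$ (and a degenerate factor equal to $1$ when $S_{k-1}=0$). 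Iterating the tower property from $i=k-1$ down to $i=1$ (the first factor being $\exp\of{t\,d_1}$, since $S_1=n$) multiplies $k-1$ copies of this constant and yields $\E{\exp\of[\big]{t\,\V{n,k,P}}}\le(1-t/n)^{-(k-1)}$. This is exactly the moment generating function of a $\Gamma(k-1,n)$ variable, i.e.\ a sum of $k-1$ independent rate-$n$ exponentials, which explains both why the induction closes so cleanly and why any sharpening of the $k=2$ factor propagates immediately to every $k$.

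Everything thus reduces to the base case, which I expect to be the main obstacle: for every $n\ge1$, every $a\in[0,1]$, and every $0\le t<n$,
\[
  \E{\exp\of[\big]{t\,\V{n,2,(a,1-a)}}}\le\frac{1}{1-t/n}.
\]
Setting $s=t/n$ and $\X\sim\Binom(n,a)$, and using $\exp\of{n\,\V{n,2,(a,1-a)}} = (\X/n)^{\X}((n-\X)/n)^{n-\X}/(a^{\X}(1-a)^{n-\X})$, the quantity to bound is the explicit finite sum $\sum_{j=0}^n\binom{n}{j}\of{a^j(1-a)^{n-j}}^{1-s}\of{(j/n)^j((n-j)/n)^{n-j}}^{s}$. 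The difficulty is that crude convexity is far too lossy: Hölder's inequality with exponents $1/(1-s)$ and $1/s$ only bounds this by $\of{\sum_j\binom{n}{j}(j/n)^j((n-j)/n)^{n-j}}^{s}$, whose inner sum is $\Theta(\sqrt n)$, so the estimate grows with $n$ rather than staying below $(1-s)^{-1}$. A sharper route is to estimate each term through a Stirling approximation $\binom{n}{j} a^j(1-a)^{n-j}\approx(2\pi n x(1-x))^{-1/2}\exp\of{-nD(x\|a)}$, with $x=j/n$ and $D(x\|a)=x\log\tfrac xa+(1-x)\log\tfrac{1-x}{1-a}$, so that the $j$-th summand is $\approx(2\pi nx(1-x))^{-1/2}\exp\of{-(1-s)nD(x\|a)}$, compare the sum to $\sqrt{n/(2\pi)}\int_0^1(x(1-x))^{-1/2}\exp\of{-(1-s)nD(x\|a)}\,dx$, and apply Laplace's method around $x=a$ using the local behaviour $D(x\|a)\approx(x-a)^2/(2a(1-a))$. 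This heuristic produces $\approx(1-s)^{-1/2}$, the true asymptotic value predicted by Wilks' theorem; since the target $(1-s)^{-1}$ is quadratically larger, there is genuine slack to absorb the Stirling constants and the finite-$n$ corrections. The delicate part, and the crux of the whole argument, is turning this into a rigorous inequality with constant exactly $1$ that holds uniformly over all $a\in[0,1]$ and all $n$ — including $a$ near the endpoints and small $n$, where the Gaussian picture breaks down — which I anticipate will require separating a bulk regime from the tails, or replacing the Laplace heuristic by a more exact algebraic estimate of the sum.
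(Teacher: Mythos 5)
Your reduction from general $k$ to the binomial case is correct, and it is essentially the paper's own argument read sequentially rather than inductively: the paper conditions on one coordinate, uses the chain rule to split off a single binary divergence, and recurses on the renormalized remainder, which is exactly your telescoping identity combined with the tower property. Your observation that the scaling factor $S_i/n$ cancels against the rescaled parameter $t' = tS_i/n$, so that each peeled factor contributes the same constant $(1-t/n)^{-1}$ independent of the (random) conditional sample size, is precisely the mechanism that makes the paper's induction close. That half of the proposal is sound.

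The gap is the base case $k=2$, which you explicitly leave open and which is indeed the crux of the whole theorem. Your Stirling/Laplace sketch aims at the asymptotic value $(1-s)^{-1/2}$ and, as you note yourself, would require uniform control over all $n$ and all $a\in[0,1]$, including the endpoints and small $n$ where the Gaussian approximation fails; nothing in the proposal carries this out, and the slack between $(1-s)^{-1/2}$ and $(1-s)^{-1}$ does not by itself absorb those error terms without substantial work. The paper avoids the analytic route entirely with an exact algebraic device you are missing: writing the $j$-th summand as $\Pr[\B_{n,a}=j]^{1-s}\,\Pr[\B_{n,j/n}=j]^{s}$ and using log-concavity of $q\mapsto\binom{n}{j}q^j(1-q)^{n-j}$ to bound it by $\Pr[\B_{n,(1-s)a+js/n}=j]$. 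The resulting sum $\sum_{j=0}^n\binom{n}{j}\bigl((1-s)a+js/n\bigr)^j\bigl(1-(1-s)a-js/n\bigr)^{n-j}$ is then shown, via an Abel-type polynomial identity, to be independent of $a$ and equal to $\sum_{i=0}^n\frac{n!}{n^i(n-i)!}s^i\le\sum_{i\ge 0}s^i=\frac{1}{1-s}$, with no approximation anywhere. Without this (or a rigorous completion of your Laplace-method plan), the proposal does not establish the theorem.
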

Note that this is also the moment generating function of a gamma
distribution with shape $k - 1$ and rate $n$. Bounding the moment
generating function is a standard technique to obtain concentration
bounds (see e.g.~\cite{bou_lug_mas_concentration_2013}), but to the best
of our knowledge \cref{intro_thm:mgf} is the first to give a finite
bound on $\E{\exp\of{s\cdot 2n\V{n,k,P}}}$ independent of $n$ for any
constant $s > 0$. As a consequence, we are able to give the first (to
the best of our knowledge) upper bounds on the $m$'th moments of
$2n\V{n,k,P}$ which do not depend on $n$ for all $m>2$. 
Using Wilks' theorem \cite{wilks_largesample_1938} on the asymptotic
distribution of the likelihood-ratio statistic, we are then able to
compute the asymptotic moments of $2n\V{n,k,P}$ for fixed $k$ and $P$ as
$n$ goes to infinity. Furthermore, our finite sample bounds on the
$m$'th non-central moment are within constant factors (with the constant
depending on $m$) of the asymptotic value.

The rest of this work is organized as follows.  In
\cref{sec:finitesampleproof} we prove
\cref{intro_thm:mgf,intro_thm:tail}, with the proof divided into two
parts: in \cref{sec:reduction} we show \cref{intro_thm:mgf} can be
derived from bounds for the special case of a binary alphabet ($k=2$),
e.g.~a binomial distribution, and in \cref{sec:binomial} we give a bound
for this simpler case.  In \cref{sec:asymptotics} we use
\cref{intro_thm:mgf} to derive moment bounds and asymptotic results.
Finally, in \cref{sec:discussion} we compare our bounds to
existing results in the literature and suggest possible directions
for future research, and in particular conjecture an improvement to
\cref{intro_thm:mgf} which would nearly close the quadratic gap between
our finite-sample bound and the bound of Wilks' theorem on the
asymptotic distribution of likelihood-ratio statistic (which does not
hold in general for finite $n$).

\section{Proof of Finite-Sample Bounds}
In this section we prove our main technical result, the moment
generating function bound of \cref{intro_thm:mgf}, and use it
to derive our new tail bound \cref{intro_thm:tail}.

\label{sec:finitesampleproof}
\subsection{Reducing the Multinomial to the Binomial}
\label{sec:reduction}
We first show that the moment generating function of the
empirical relative entropy for arbitrary finite alphabets of size $k$
can be bounded in terms of the special case $k = 2$. Formally, this
requires the bound to be of a particular form:
\begin{defn}\label{defn:sampindbound}
  A function $f:[0,1) \to \mathbb R$ is a \emph{sample-independent MGF
  bound for the binomial KL} if for every positive integer $n$, real
  $t\in [0, n)$, and $p\in [0,1]$ it holds that
  \[
    \Eexp{t\cdot \V{n,2,(p,1-p)}} \leq f(t/n).
  \]
\end{defn}
\begin{rmrk}
  Recalling that $2n\V{n,k,P}$ is the likelihood-ratio statistic,
  \cref{defn:sampindbound} is equivalent to requiring bounds on the
  moment generating function $\E{\exp\of{s\cdot 2n\V{n,2,(p,1-p)}}}$ for
  $0\leq s <1/2$ which do not depend on $n$ or $p$.
\end{rmrk}

We can now state our reduction.
\begin{prop}\label{thm:multi_bin_reduction}
  Let $P = (p_1,\dotsc, p_k)$ be a distribution on a set of size $k$ for
  $k \geq 2$. Then for every sample-independent MGF bound for the
  binomial KL $f:[0, 1) \to \mathbb R$ and $0\leq t < n$, the moment
  generating function of $\V{n,k,P}$ satisfies
  \[
    \Eexp{t\cdot \V{n,k,P}} \leq f(t/n)^{k-1}.
  \]
\end{prop}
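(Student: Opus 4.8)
The plan is to induct on the alphabet size $k$. The base case $k=2$ is immediate, since the claimed bound $\Eexp{t\cdot\V{n,2,(p_1,1-p_1)}}\leq f(t/n)$ is exactly the statement that $f$ is a sample-independent MGF bound for the binomial KL. For the inductive step I would single out the first symbol and split the alphabet $\set{1,\dotsc,k}$ into the blocks $\set{1}$ and $\set{2,\dotsc,k}$, using the chain rule (grouping property) for relative entropy to rewrite $\V{n,k,P}$ as a binomial divergence plus a conditional divergence on one fewer symbol.

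Concretely, writing $q_i=\X_i/n$, $q_{\geq2}=1-q_1=(n-\X_1)/n$, and $p_{\geq2}=1-p_1$, and letting $q^{(2)},p^{(2)}$ denote the renormalized conditional distributions of $q$ and $P$ on $\set{2,\dotsc,k}$, the grouping property gives
\[
  \V{n,k,P}=\KL\diver{(q_1,1-q_1)}{(p_1,1-p_1)}+q_{\geq2}\cdot\KL\diver{q^{(2)}}{p^{(2)}}.
\]
The first term is exactly $\V{n,2,(p_1,1-p_1)}$, since $\X_1\sim\Binom(n,p_1)$. For the second, the key probabilistic fact is that conditioned on $\X_1=x_1$ the remaining counts $(\X_2,\dotsc,\X_k)$ are multinomial with $m\defeq n-x_1$ trials and probabilities $P'\defeq p^{(2)}$; thus $q^{(2)}$ is the corresponding empirical distribution, the conditional divergence equals $\V{m,k-1,P'}$, and $q_{\geq2}=m/n$. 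Writing $m=n-\X_1$, this gives the identity
\[
  \V{n,k,P}=\V{n,2,(p_1,1-p_1)}+\frac{m}{n}\cdot\V{m,k-1,P'}.
\]

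From here I would compute the MGF by conditioning on $\X_1$. As the binomial term depends only on $\X_1$, it factors out of the inner expectation, leaving $\CE{\exp\of[\big]{\tfrac{tm}{n}\V{m,k-1,P'}}}{\X_1}$ to bound. Setting $t'\defeq tm/n$, we have $t'<m$ because $t<n$, so the inductive hypothesis applies to $\V{m,k-1,P'}$ at parameter $t'$ and gives the bound $f(t'/m)^{k-2}$. The crux of the argument is that $t'/m=t/n$ no matter the value of $m=n-\X_1$, so this conditional bound is the constant $f(t/n)^{k-2}$, independent of $\X_1$. This is precisely where sample-independence of $f$ is indispensable: a bound depending on the residual sample count $m$ could not be pulled back out through the outer expectation over the random $\X_1$.

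Combining these, the constant $f(t/n)^{k-2}$ comes out of the outer expectation and the binomial term is handled by the base case, so that
\[
  \Eexp{t\cdot\V{n,k,P}}\leq f(t/n)^{k-2}\cdot\Eexp{t\cdot\V{n,2,(p_1,1-p_1)}}\leq f(t/n)^{k-2}\cdot f(t/n)=f(t/n)^{k-1},
\]
which closes the induction. The only points needing care are the degenerate realizations: when $p_1\in\set{0,1}$ symbol $1$ is absent or $P'$ is undefined, handled by discarding zero-probability symbols at the outset; and when $m=0$ the conditional term vanishes and its MGF equals $1\leq f(t/n)^{k-2}$, so the bound still holds. Neither affects either side of the claimed inequality.
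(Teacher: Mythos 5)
Your proof is correct and follows essentially the same route as the paper: induction on $k$ with the chain-rule (grouping) decomposition of the divergence, conditioning on one count, and exploiting that $t'/m = t/n$ so the inductive bound is constant in the conditioning variable. The only cosmetic difference is that you split off symbol $1$ while the paper splits off symbol $k$, and the degenerate cases are handled equivalently.
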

\begin{proof}
  This is a simple induction on $k$. The base case $k = 2$ holds
  by definition of sample-independent MGF bound for the binomial KL.
  
  For the inductive step, we compute conditioned on the value of
  $\X_k$. Note that if $p_k = 1$ then the inductive step is trivial
  since $\V{n,k,P}=0$ with probability $1$, so assume that $p_k < 1$.
  For each $i \in \set{1, \dotsc, k-1}$ define $p'_i = p_i/(1-p_k)$,
  so that conditioned on $\X_k = m$, the variables $(\X_1, \dotsc,
  \X_{k-1})$ are distributed multinomially with $n - m$ samples
  and probabilities $P'=(p'_1,\dotsc, p'_{k-1})$. Simple rearranging
  (using the chain rule) implies that
  \begin{align}
    \V{n,k,P}
    &= \KL\diver{\of{\X_1/n,\dotsc,\X_k/n}}{\of{p_1,\dotsc,p_n}}
    \nonumber
    \\
    &= \KL\diverg[\big]{\of{\X_k/n, 1 - \X_k/n}}{\of{p_k,1-p_k}}
    + \frac{n - \X_k}{n}\cdot \V{n-\X_k,k-1,P'}
    \label{eqn:decomp}
  \end{align}
  where
  \[
    \V{n-\X_k,k-1,P'}
    =
    \KL\diver{\of{\frac{\X_1}{n-\X_k},\dotsc,
      \frac{\X_{k-1}}{n-\X_k}}}{\of{p'_1,\dotsc,p'_{k-1}}}
  \]
  and where we treat the second term of \cref{eqn:decomp}
  as $0$ if $\X_k = n$. Now for every $0\leq t < n$ we have
  \begin{align*}
    \Exp\brof[\bigg]{&\exp\of[\big]{t\cdot \V{n,k,P}}} \\
    &=\Exp\brof[\bigg]{\Exp\condbr[\Big]{\exp\of[\big]{t\cdot \V{n,k,P}}}{\X_k}}\\
    &= \Exp\brof[\Bigg]{\exp\of[\Big]{t\cdot \KL\diverg[\big]{\of{\X_k/n,1-\X_k/n}}{\of{p_k,1-p_k}}}
     \cdot \CE{\exp\of{
      t \cdot\frac{n-\X_k}{n}\cdot\V{n-\X_k,k-1,P'}}}{\X_k}}.
  \intertext{%
  Since $0 \leq t\cdot \frac{n - \X_k}{n} < n - \X_k$, the inductive hypothesis for
  $\V{n-\X_k,k-1,P'}$ implies the upper bound
  }
    &\leq \E{\exp\of[\Big]{t\cdot \KL\diverg[\big]{\of{\X_k/n,1-\X_k/n}}{\of{p_k,1-p_k}}}
      \cdot f\of{\frac{t(n-\X_k)/n}{n-\X_k}}^{k-2}}\\
    &= f(t/n)^{k-2} \cdot
    \E{\exp\of[\Big]{t\cdot \KL\diverg[\big]{\of{\X_k/n,1-\X_k/n}}{\of{p_k,1-p_k}}}}.
  \end{align*}
  By definition of a sample-independent MGF bound for the binomial KL,
  the second term is at most $f(t/n)$, so we get a bound of
  $f(t/n)^{k-1}$ as desired.
\end{proof}
\begin{rmrk}
  Mardia et al.\ \cite{mar_jia_tan_now_wei_concentration_2019} use the
  same chain rule decomposition of the multinomial KL to inductively
  bound the (non-exponential) moments.
\end{rmrk}

\subsection{Bounding the Binomial}
\label{sec:binomial}

It remains to give a sample-independent MGF bound for the binomial KL:
\begin{prop}\label{lem:binmgf}
  The function
    \[f(x) = \frac{1}{1-x} \]
  is a sample-independent MGF bound for the binomial KL.
\end{prop}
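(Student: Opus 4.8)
The plan is to fix $n$ and $p\in(0,1)$ (the cases $p\in\{0,1\}$ are trivial), write $s=t/n\in[0,1)$ and $\hat p=\X_1/n$ for the empirical frequency, and expand the binomial MGF as an explicit finite sum. Since $\V{n,2,(p,1-p)}=\KL\diver{\of{\hat p,1-\hat p}}{\of{p,1-p}}$, expanding over the value $j$ of $\X_1$ gives
\[
  \Eexp{t\cdot\V{n,2,(p,1-p)}}
  =\sum_{j=0}^{n}\binom{n}{j}p^{j}(1-p)^{n-j}
    \exp\of[\Big]{sn\cdot\KL\diver{\of{j/n,1-j/n}}{\of{p,1-p}}}.
\]
The starting observation I would exploit is that $n\,\KL\diver{\of{j/n,1-j/n}}{\of{p,1-p}}$ is the log-likelihood ratio at the maximum-likelihood point, so multiplying the binomial mass $a_j=\binom{n}{j}p^{j}(1-p)^{n-j}$ by $\exp\of{n\KL(\cdots)}$ cancels all dependence on $p$ and leaves $b_j=\binom{n}{j}(j/n)^{j}(1-j/n)^{n-j}$, which is exactly the probability that $\Binom(n,j/n)$ equals $j$.

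Consequently the target collapses to the clean symmetric form $\sum_{j=0}^{n}a_j^{1-s}b_j^{s}$, a R\'enyi-type affinity between the true law and its empirically tilted version, and the whole lemma reduces to proving $\sum_{j=0}^{n}a_j^{1-s}b_j^{s}\le 1/(1-s)$ for every $n$, $p$, and $s\in[0,1)$. Two facts about $b_j$ organize the estimate: each $b_j\le 1$, with equality only at the endpoints $j\in\{0,n\}$, and in the interior a Stirling estimate gives $b_j$ of order $(n\hat p(1-\hat p))^{-1/2}$. One cannot simply bound $b_j^{s}\le 1$, since $\sum_j a_j^{1-s}$ on its own already grows like $n^{s/2}$; it is precisely the factor $b_j^{s}$, of order $(n\hat p(1-\hat p))^{-s/2}$, that cancels this growth, so the two factors must be kept together and the leading constant tracked exactly.

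To see that the constant $1/(1-s)$ is even achievable, I would first check the bulk heuristically: near the mode $j\approx np$ one has $a_j\approx(2\pi\sigma^2)^{-1/2}\exp\of{-(j-np)^2/(2\sigma^2)}$ with $\sigma^2=np(1-p)$, while $b_j\approx(2\pi\sigma^2)^{-1/2}$ is slowly varying, so $a_j^{1-s}b_j^{s}\approx(2\pi\sigma^2)^{-1/2}\exp\of{-(1-s)(j-np)^2/(2\sigma^2)}$ and the Gaussian sum is about $(1-s)^{-1/2}$, comfortably below $(1-s)^{-1}$. Thus the bulk carries a large margin, and all the difficulty is concentrated in the tails and at the endpoints, where the Gaussian approximation degrades and $b_j$ climbs to $1$. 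An equivalent route to the reduction is to pass to the tail via $\Eexp{t\cdot V}=1+\int_0^\infty s\,e^{sx}\,\PR{nV>x}\dint x$ (with $V=\V{n,2,(p,1-p)}$ and $x=nv$) and compare against $1/(1-s)=1+\int_0^\infty s\,e^{sx}e^{-x}\dint x$, writing $\{V>v\}=\{\hat p\ge q_+\}\cup\{\hat p\le q_-\}$ for the two roots $q_-<p<q_+$ of $\KL\diver{\of{q,1-q}}{\of{p,1-p}}=v$ and applying Chernoff to each one-sided tail.

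The hard part, where the constant $1$ genuinely lives, is exactly these tails and endpoints, and the obstacle is that the obvious tail estimate loses the wrong factor: the union bound gives only $\PR{nV>x}\le 2e^{-x}$, hence $2/(1-s)$, which is inadmissible, and the pointwise bound $\PR{nV>x}\le e^{-x}$ is simply false (already at $n=1$, $p=\tfrac12$, where $nV\equiv\log 2$), so no purely pointwise argument can succeed. What should rescue the constant is that the two one-sided tails are never simultaneously near-tight: when $q_+\to 1$ the upper Chernoff bound becomes sharp with constant $1$, but there the lower tail and the $e^{sx}$-weight on the remaining range are negligible, and symmetrically. Making this uniform in $p$ and $n$ is the delicate point, and my plan for it is to reduce to an extremal $p$ by a monotonicity argument — with the Poisson limit $p\to0$, $np$ fixed (where the endpoint mass near $j=0$ is largest) as the prime suspect for the worst case — and then to bound $\sum_j a_j^{1-s}b_j^{s}$ there by a careful Stirling/Laplace estimate that keeps the endpoint terms explicit rather than through the lossy union of tails. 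Pinning this final estimate so that the constant comes out exactly $1$ is the step I expect to be the main obstacle.
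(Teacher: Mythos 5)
Your reduction of the statement to $\sum_{j=0}^n a_j^{1-s} b_j^s \le 1/(1-s)$, with $a_j=\binom{n}{j} p^j(1-p)^{n-j}$ and $b_j=\binom{n}{j} (j/n)^j(1-j/n)^{n-j}$, is exactly the paper's starting point, and your diagnosis of why the naive routes fail (the union bound over the two one-sided tails loses a factor of $2$, and the pointwise bound $\Pr[nV>x]\le e^{-x}$ is false) is correct. But from that point on what you have is a program, not a proof: the monotonicity-in-$p$ reduction to the Poisson limit is asserted without argument (and is not obviously even well posed, since the sum is symmetric under $p\leftrightarrow 1-p$ and so cannot be globally monotone), and you yourself flag that the final Stirling/Laplace estimate pinning the constant at exactly $1$ is the main unresolved obstacle. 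That estimate \emph{is} the content of the proposition, so the gap is genuine.

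The paper closes it with a purely algebraic, non-asymptotic argument that your plan is missing. Since $q\mapsto\binom{n}{j} q^j(1-q)^{n-j}$ is log-concave on $[0,1]$, each term satisfies $a_j^{1-s}b_j^s\le \binom{n}{j} r_j^j(1-r_j)^{n-j}$ with $r_j=(1-s)p+sj/n$; that is, the weighted geometric mean of the two binomial masses is at most the mass at the mixed parameter. The resulting majorant $G_n(p,s)=\sum_j\binom{n}{j} r_j^j(1-r_j)^{n-j}$ turns out to be \emph{exactly independent of} $p$ (proved by induction on $n$ after differentiating in $p$), and equals the polynomial $\sum_{i=0}^n \frac{n!}{n^i(n-i)!}\,s^i$, whose coefficients are all at most $1$; hence $G_n(p,s)\le\sum_{i\ge 0}s^i=1/(1-s)$. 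No Gaussian approximation, tail splitting, or extremal-$p$ analysis is needed, and the endpoint terms $b_0=b_n=1$ that concern you are absorbed automatically. If you want to salvage your route, the log-concavity step is the single idea to import: it converts the R\'enyi affinity into a genuine sum of probabilities, after which the problem becomes a finite combinatorial identity rather than an analytic estimate.
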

\begin{rmrk}
  Hoeffding's inequality \cite{hoeffding_probability_1963} can be used
  to give a simple proof of the weaker claim that $2^x/(1-x)$ is a
  sample-independent MGF bound for the binomial KL.
\end{rmrk}
\begin{proof}
  Let $\B_{n,p}$ denote a random variable with $\Binom(n, p)$ distribution.
  Using the fact that
  \[\exp{\of[\Big]{n\cdot \KL\diverg[\big]{\of{i/n,1-i/n}}{\of{p,1-p}}}}
    = \frac{\PR{\B_{n,i/n}=i}}{\PR{\B_{n,p}=i}}\]
  for any integers $0\leq i\leq n$, we can expand the moment generating
  function as
  \begin{align*}
    \E{\exp\of{nx\cdot \KL\diver{\of{\frac{\B_{n,p}}{n},1-\frac{\B_{n,p}}{n}}}{\of{p,1-p}}}}
    &= \sum_{i=0}^n \PR{\B_{n,p}=i}^{1-x}\PR{\B_{n,i/n}=i}^x.
  \end{align*}
  For every $n$ and $i$, the function $q \mapsto \PR{\B_{n,q}=i}=\binom ni q^i(1-q)^{n-i}$ is
  easily seen to be log-concave over $[0,1]$, so we can upper bound the moment generating function by
  \begin{align*}
    G_n(p,x)\defeq \sum_{i=0}^n \PR{\B_{n,(1-x)p+ix/n}=i}
    &= \sum_{i=0}^n \binom{n}{i}\of[\big]{(1-x)p + ix/n}^i\of[\big]{1 - \of[\big]{(1-x)p+ix/n}}^{n-i}
  \end{align*}
  It turns out $G_n$ does not depend on $p$ and can be simplified
  significantly, which we prove in the following two lemmas.
\begin{lem}
\label{lem:constant-p}
  For all non-negative integers $n$ and real numbers $x$ and $p$ we have
  $G_n(p,x)=G_n(0,x)$.
\end{lem}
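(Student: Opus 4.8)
The plan is to show that the apparent $p$-dependence disappears once one generalizes the expression to treat the step size $x/n$ as a free parameter and proves invariance in the ``offset'' variable by induction on $n$. Concretely, write $a_i = (1-x)p + ix/n = \alpha + i\beta$ with $\alpha = (1-x)p$ and $\beta = x/n$, and introduce the two-parameter family
\[
  H_n(\alpha,\beta) \defeq \sum_{i=0}^n \binom{n}{i}\of[\big]{\alpha + i\beta}^i\of[\big]{1-\alpha-i\beta}^{n-i},
\]
so that $G_n(p,x) = H_n\of[\big]{(1-x)p,\, x/n}$ while $G_n(0,x) = H_n\of[\big]{0,\, x/n}$. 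The desired identity $G_n(p,x) = G_n(0,x)$ then reduces to the cleaner assertion that, for every fixed $\beta$, the polynomial $H_n(\cdot,\beta)$ does not depend on its first argument $\alpha$.

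First I would dispose of the base case $H_0 \equiv 1$ (with $H_1(\alpha,\beta) = (1-\alpha) + (\alpha+\beta) = 1+\beta$ as a sanity check, indeed constant in $\alpha$). For the inductive step I would differentiate $H_n$ in $\alpha$ term by term, which is legitimate since everything in sight is a polynomial, giving
\[
  \partial_\alpha H_n = \sum_{i=0}^n \binom{n}{i}\of[\Big]{i\, a_i^{i-1}(1-a_i)^{n-i} - (n-i)\, a_i^i (1-a_i)^{n-i-1}},
\]
where the $i=0$ and $i=n$ boundary terms cause no trouble because the only negative powers $a_0^{-1}$ and $(1-a_n)^{-1}$ that could appear are annihilated by the prefactors $i$ and $n-i$.

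The key step is to absorb the linear prefactors into the binomial coefficients via $\binom{n}{i} i = n\binom{n-1}{i-1}$ and $\binom{n}{i}(n-i) = n\binom{n-1}{i}$, then reindex the first sum by $m = i-1$. Since $a_{m+1} = (\alpha+\beta) + m\beta$, the two resulting sums are precisely $H_{n-1}$ evaluated at first arguments $\alpha+\beta$ and $\alpha$, respectively, yielding the recursion
\[
  \partial_\alpha H_n(\alpha,\beta) = n\of[\big]{H_{n-1}(\alpha+\beta,\beta) - H_{n-1}(\alpha,\beta)}.
\]
By the inductive hypothesis $H_{n-1}(\cdot,\beta)$ is constant, so the bracket vanishes, whence $\partial_\alpha H_n \equiv 0$ and $H_n(\cdot,\beta)$ is constant as well; specializing the first argument to $(1-x)p$ versus $0$ gives the lemma.

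I expect the recursion above to be the only genuine obstacle, and it is exactly what forces the generalization from the tied value $\beta = x/n$ to an arbitrary $\beta$: differentiating $H_n$ naturally produces $H_{n-1}$ with the \emph{same} step size $\beta$ rather than $x/(n-1)$, so the induction cannot close within the original one-parameter family $G_n$. (One could alternatively read off the same $\alpha$-independence from a suitable Abel-type binomial identity, but the self-contained induction avoids having to pin down its exact form.)
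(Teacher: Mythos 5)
Your proof is correct and is essentially the paper's own argument: your $H_n(\alpha,\beta)$ is the paper's auxiliary function $R_n(q,x)$ under the substitution $\beta = x/n$, and your recursion $\partial_\alpha H_n(\alpha,\beta) = n\,(H_{n-1}(\alpha+\beta,\beta)-H_{n-1}(\alpha,\beta))$ is exactly the paper's identity $\frac{\partial}{\partial q}R_n(q,x) = n\,(R_{n-1}(q+\tfrac xn,\tfrac{x(n-1)}{n})-R_{n-1}(q,\tfrac{x(n-1)}{n}))$, closed by the same induction. Your parametrization by the step size $\beta$ is arguably cleaner since it makes explicit why the induction must range over a two-parameter family, but the content is identical.
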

\begin{proof}
  Define $R_n(q, x) = \sum_{i=0}^n \binom ni \of{q +
    ix/n}^i\of{1-q - ix/n}^{n-i}$ (where when $i=n=0$ we treat
    $0/0=1$) so that $G_n(p,x) = R_n\of{(1-x)p, x}$
    and it suffices to prove that $R_n(q,x)=R_n(0,x)$.
  We prove this by induction on $n$: the
    base case of $n = 0$ holds since $R_n(q, x) = 1$ always, and
    for the inductive step we have
  \begin{align*}
    \frac\partial{\partial q}R_n(q,x)
    &= \sum_{i=0}^n\binom ni \frac\partial{\partial q}\of{(q + ix/n)^i(1-q-ix/n)^{n-i}}\\
    &= \sum_{i=0}^n\binom ni \of{i(q + ix/n)^{i-1}(1-q-ix/n)^{n-i} - (n-i)(q + ix/n)^i(1-q-ix/n)^{n-i-1}}\\
    &= n\sum_{i=1}^n \binom{n-1}{i-1}\of{q + x/n + \frac{i-1}{n-1}\cdot \frac{x(n-1)}{n}}^{i-1}\of{1-q-x/n-\frac{i-1}{n-1}\cdot \frac{x(n-1)}{n}}^{n-1-(i-1)}\\
    &\qquad- n\sum_{i=0}^{n-1}\binom{n-1}{i}\of{q + \frac{i}{n-1}\cdot \frac{x(n-1)}{n}}^i\of{1-q- \frac{i}{n-1}\cdot \frac{x(n-1)}{n}}^{n-1-i}\\
    &= n\sum_{i=0}^{n-1} \binom{n-1}{i}\of{q + x/n + \frac{i}{n-1}\cdot \frac{x(n-1)}{n}}^{i}\of{1-q-x/n-\frac{i}{n-1}\cdot \frac{x(n-1)}{n}}^{n-1-i}\\
    &\qquad- n\sum_{i=0}^{n-1}\binom{n-1}{i}\of{q + \frac{i}{n-1}\cdot \frac{x(n-1)}{n}}^i\of{1-q- \frac{i}{n-1}\cdot \frac{x(n-1)}{n}}^{n-1-i}\\
    &= n\of{R_{n-1}\of{q + \frac xn, \frac{x(n-1)}{n}} - R_{n-1}\of{q, \frac{x(n-1)}{n}}}\\
    &= n\of[\big]{R_{n-1}\of{0, x(n-1)/n} - R_{n-1}\of{0,x(n-1)/n}}=0
  \end{align*}
  where the last line is by the inductive hypothesis.
\end{proof}
\begin{lem}
\label{lem:bound-zero-p}
  For all non-negative integers $n$ we have
    $G_n(p,x)=\displaystyle\sum_{i=0}^n\frac{n!}{n^i(n-i)!}\cdot x^i$.
\end{lem}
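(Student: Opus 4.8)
The plan is to combine \cref{lem:constant-p} with a direct coefficient extraction. By that lemma it suffices to evaluate $G_n(0,x) = \sum_{i=0}^n \binom ni (ix/n)^i(1-ix/n)^{n-i}$, and since both this and the claimed formula $\sum_{j=0}^n \frac{n!}{n^j(n-j)!}x^j$ are polynomials in $x$ of degree at most $n$, it is enough to check that the coefficient of $x^j$ agrees on the two sides for each $0\le j\le n$.

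To read off the coefficient of $x^j$ in $G_n(0,x)$, I would expand the factor $(1 - ix/n)^{n-i}$ by the binomial theorem, writing the $i$-th summand as $\binom ni \frac{i^i}{n^i}x^i \sum_{\ell=0}^{n-i}\binom{n-i}{\ell}\frac{(-i)^\ell}{n^\ell}x^\ell$. The monomial $x^j$ collects exactly the terms with $\ell = j-i$, so the coefficient of $x^j$ equals $\frac{1}{n^j}\sum_{i=0}^j (-1)^{j-i}\binom ni\binom{n-i}{j-i} i^j$, using $i^i\cdot i^{j-i}=i^j$ (with the convention $0^0=1$ handling the $i=j=0$ case). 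The standard identity $\binom ni\binom{n-i}{j-i}=\binom nj\binom ji$ then lets me factor out $\binom nj/n^j$, reducing the coefficient to $\frac{\binom nj}{n^j}\sum_{i=0}^j (-1)^{j-i}\binom ji\, i^j$.

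The crux is the inner alternating sum, which I expect to be the main (though standard) obstacle: it is precisely the $j$-th finite difference of the monomial $t\mapsto t^j$ evaluated at $0$, equivalently $j!$ times the Stirling number of the second kind $\stirling{j}{j}=1$, and hence equals $j!$. This can be justified on its own by noting that the $j$-th finite difference annihilates every polynomial of degree less than $j$ and returns $j!$ times the leading coefficient of a degree-$j$ polynomial, or by a short inclusion--exclusion count of the surjections from a $j$-element set onto itself. Substituting $j!$ back yields coefficient $\frac{\binom nj}{n^j}\,j! = \frac{n!}{n^j(n-j)!}$, which matches the claimed formula term by term and completes the proof.
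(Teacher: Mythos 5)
Your proposal is correct and follows essentially the same route as the paper: apply \cref{lem:constant-p}, extract the coefficient of each power of $x$ in $G_n(0,x)$ via the binomial expansion, and recognize the resulting alternating sum as $j!\,\stirling{j}{j}=j!$ (the paper likewise reduces to the Stirling number of the second kind $\stirling{i}{i}=1$). The only difference is cosmetic bookkeeping in how the binomial coefficients are regrouped.
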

\begin{proof}
  By Lemma~\ref{lem:constant-p} we have that
  $G_n(p,x)=G_n(0,x)=\sum_{i=0}^n\of{\frac{ix}{n}}^i\of{1-\frac{ix}{n}}^{n-i}$
  is a polynomial in $x$ of degree at most $n$.  For any non-negative
  integer $i\leq n$ we can compute the coefficient of $x^i$ in
  $G_n(0,x)$ by summing over the power of $x$ contributed by the $(jx/n)^j$ term
  for each $j$:
  \begin{align*}
    \sum_{j=0}^i \binom{n}{j} \of{\frac{j}{n}}^j \cdot \binom{n-j}{i-j}\of{-\frac{j}{n}}^{i-j}
    &=
    \sum_{j=0}^i \frac{n!}{j!(n-j)!}\cdot \frac{(n-j)!}{(i-j)!(n-i)!}\cdot  \of{\frac{j}{n}}^i (-1)^{i-j}\\
    &=
    \frac{n!}{n^i(n-i)!}\cdot \frac{1}{i!}\sum_{j=0}^i \binom{i}{j} j^i (-1)^{i-j}
  \end{align*}
  where $\frac{1}{i!}\sum_{j=0}^i \binom{i}{j} j^i (-1)^{i-j}$
  is by definition the Stirling number of the second kind $\stirling ii$
  and is equal to $1$ (see e.g.~\cite[Chapter 6.1]{gra_knu_pat_concrete_1994}),
  so that we can simplify this to
  \[
    \frac{n!}{n^i(n-i)!}
  \]
  as desired.
\end{proof}
Putting together Lemma~\ref{lem:constant-p} and
Lemma~\ref{lem:bound-zero-p}, we have that the moment generating
function is at most
$G_n(p,x)=\sum_{i=0}^n \frac{n!}{n^i(n-i)!}x^i$, where
$\frac{n!}{n^i(n-i)!} = \prod_{j=0}^{i-1}\of{1-j/n}\leq 1$
and thus for each $x\in [0,1)$ we have $G_n(p,x)\leq\sum_{i=0}^n x^i
\leq \sum_{i=0}^\infty x^i = 1/(1-x)$.
\end{proof}

Together, \cref{thm:multi_bin_reduction,lem:binmgf} imply our
moment generating function bound (\cref{intro_thm:mgf}), and thus a
Chernoff bound implies our tail bound:
\begin{proof}[Proof of \cref{intro_thm:tail}]
  By \cref{intro_thm:mgf}, we know for every $t\in [0,n)$ that
  $\Eexp{t\cdot \V{n,k,P}} \leq \of{\frac1{1-t/n}}^{k-1}$, so by a
  Chernoff bound
  \begin{align*}
    \PR{\V{n,k,P} \geq \eps}
    &\leq
    \inf_{t\in [0, n)} \exp\of{-t\eps} \cdot \of{\frac1{1-t/n}}^{k-1}.
  \end{align*}
  The result follows by making the optimal choice $t/n = 1 -
  (k-1)/(\eps n)$ when $\eps > \of{k-1}/{n}$.
\end{proof}

\section{Moment and Asymptotic Bounds}
\label{sec:asymptotics}

In this section we use \cref{intro_thm:mgf} to give finite-sample
and asymptotic bounds on the moments of $\V{n,k,P}$. We will need
some basic facts about subexponential random variables, for which we
follow the textbook of Vershynin \cite{vershynin_highdimensional_2018}.

\begin{lem}[{\cite[Definition 2.7.5, Proposition 2.7.1]{vershynin_highdimensional_2018}}]
\label{lem:subexponential}
  There is a universal constant $C > 0$ such that every real-valued
    random variable $X$ with finite \emph{subexponential norm}
    $\norm{X}_{\psi_1} \defeq \inf\set{t>0:\E{\exp\of{\abs{X}/t}}\leq 2}$
    satisfies $\E{\abs{X}^m}^{1/m}\leq Cm \norm{X}_{\psi_1}$ for all $m\geq 1$.
\end{lem}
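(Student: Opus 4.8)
The plan is to reduce to the normalized case $\norm{X}_{\psi_1}=1$ and then read the moment bounds directly off the Taylor series of the exponential. Assuming $\norm{X}_{\psi_1}>0$ (otherwise $X=0$ almost surely and every moment vanishes, making the claim trivial), I would first record the homogeneity reduction: writing $K=\norm{X}_{\psi_1}$, the random variable $X/K$ has $\norm{X/K}_{\psi_1}=1$, so a bound $\E{\abs{X/K}^m}^{1/m}\leq Cm$ in the normalized case immediately gives $\E{\abs{X}^m}^{1/m}=K\cdot\E{\abs{X/K}^m}^{1/m}\leq Cm\norm{X}_{\psi_1}$. To make the normalized case usable I would check that the infimum defining the norm is attained, i.e.\ that $\Eexp{\abs{X}}\leq 2$ when $K=1$: the map $t\mapsto\Eexp{\abs{X}/t}$ is nonincreasing, and as $t\downarrow 1$ we have $\exp\of{\abs{X}/t}\uparrow\exp\of{\abs{X}}$ pointwise, so monotone convergence gives $\Eexp{\abs{X}}=\lim_{t\downarrow 1}\Eexp{\abs{X}/t}\leq 2$.

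With $\Eexp{\abs{X}}\leq 2$ in hand, I would expand $\exp\of{\abs{X}}=\sum_{j=0}^{\infty}\abs{X}^j/j!$ and take expectations. Because every summand is nonnegative, Tonelli's theorem (equivalently, monotone convergence on the partial sums) lets me interchange the sum and the expectation, yielding $\sum_{j=0}^{\infty}\E{\abs{X}^j}/j!=\Eexp{\abs{X}}\leq 2$. Discarding all but the $m$'th term---again by nonnegativity---gives $\E{\abs{X}^m}/m!\leq 2$, that is, $\E{\abs{X}^m}\leq 2\cdot m!$.

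To finish I would take $m$'th roots and bound the factorial crudely by $m!\leq m^m$, so that $\E{\abs{X}^m}^{1/m}\leq\of{2\,m!}^{1/m}\leq 2^{1/m}\,m\leq 2m$ for every $m\geq 1$; this proves the normalized case with $C=2$, and the homogeneity step then yields the lemma. There is no substantial obstacle here: the only points requiring care are the two appeals to monotone convergence---attainment of the infimum in the definition of $\norm{X}_{\psi_1}$, and the term-by-term integration of the exponential series---both of which are justified precisely because $\abs{X}$ and each power $\abs{X}^j$ are nonnegative.
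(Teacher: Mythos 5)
The paper never proves this lemma itself---it is imported wholesale from Vershynin's textbook (Definition 2.7.5 and Proposition 2.7.1)---so there is no in-paper argument to compare against; the relevant comparison is with the standard textbook proof. Your argument is correct and is in fact more direct than that proof: the usual route first converts the Orlicz-norm condition $\E{\exp\of{\abs{X}/K}}\leq 2$ into the tail bound $\PR{\abs{X}\geq t}\leq 2e^{-t/K}$ via Markov's inequality and then integrates the tail to obtain $\E{\abs{X}^m}\leq 2K^m\Gamma\of{m+1}$, whereas you read $\E{\abs{X}^m}\leq 2\cdot m!$ straight off the Taylor series of $\exp\of{\abs{X}}$, skipping the layer-cake computation. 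The supporting steps---the homogeneity reduction, the dismissal of the degenerate case $\norm{X}_{\psi_1}=0$, and the two monotone-convergence appeals showing that the infimum in the definition of $\norm{X}_{\psi_1}$ is attained and that the series may be integrated term by term---are all sound.

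The one genuine (but small) gap is that extracting the $m$'th term of the exponential series only works for \emph{integer} $m$, while the lemma is used for all real $m\geq 1$ (e.g.\ in \cref{thm:asymptotic}, where the limiting moment $2^m\Gamma\of{m+\frac{k-1}{2}}/\Gamma\of{\frac{k-1}{2}}$ is a function of a real exponent). This is easily repaired: for non-integer $m$, Lyapunov's inequality gives $\E{\abs{X}^m}^{1/m}\leq\E{\abs{X}^{\ceil{m}}}^{1/\ceil{m}}\leq 2\ceil{m}\leq 4m$, or one can fall back on the tail-integration argument, which handles real exponents natively and yields $\E{\abs{X}^m}\leq 2\Gamma\of{m+1}\leq 2m^m$. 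Either patch preserves a universal constant, so the omission is cosmetic rather than structural, but you should state explicitly which version of ``$m\geq 1$'' you are proving.
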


\Cref{lem:subexponential} allows us to bound the moments of $2n\V{n,k,P}$
uniformly for all $n$.
\begin{thm}
\label{cor:moments}
  For every $n$, $k$, and $P$, it holds that
  $\norm{2n\V{n,k,P}}_{\psi_1}\!\leq 4(k-1)$ and
  $\norm{2n\V{n,k,P}-\E{2n\V{n,k,P}}}_{\psi_1}\allowbreak\leq 8(k-1)$.
  In particular, there exist universal constants $C_1,C_2>0$ such that for all
  $n$, $k$, $P$ and $m\geq 1$
  \begin{align*}
    \E{\of{2n\V{n,k,P}}^m}&\leq (C_1m(k-1))^m&
    \E{\of{2n\V{n,k,P}-\E{2n\V{n,k,P}}}^m}&\leq (C_2m(k-1))^m
  \end{align*}
\end{thm}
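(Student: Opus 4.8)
The plan is to derive everything from the moment generating function bound \cref{intro_thm:mgf} together with the subexponential moment comparison \cref{lem:subexponential}. Write $Y = 2n\V{n,k,P}$, which is nonnegative since relative entropy is nonnegative. Substituting $t = 2ns$ into \cref{intro_thm:mgf} and recalling $2n\V{n,k,P} = Y$, I first record the reformulation that for every $0 \leq s < 1/2$,
\[
  \E{\exp\of{sY}} \leq \of{\frac1{1-2s}}^{k-1},
\]
where the constraint $t < n$ has become $s < 1/2$.

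To bound $\norm{Y}_{\psi_1}$, I would show that $u = 4(k-1)$ is admissible in the infimum defining the norm, i.e.\ that $\E{\exp\of{Y/u}} \leq 2$. Taking $s = 1/(4(k-1)) < 1/2$ in the reformulated bound reduces this to $\of{1 - \frac1{2(k-1)}}^{-(k-1)} \leq 2$, equivalently $\of{1 - \frac1{2m}}^m \geq \frac12$ with $m = k-1 \geq 1$. The one genuinely quantitative step is to verify this uniform inequality: the quantity $\of{1 - \frac1{2m}}^m$ equals $\frac12$ at $m = 1$ and is increasing in $m$ (writing $u = 1/(2m)$, its logarithm is $\frac{\ln(1-u)}{2u}$, which increases as $u \downarrow 0$, with limit $e^{-1/2}$), so it stays at least $\frac12$. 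This gives $\norm{Y}_{\psi_1} \leq 4(k-1)$.

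For the centered norm, I would use that $\norm{\,\cdot\,}_{\psi_1}$ is a genuine norm (Vershynin) and apply the triangle inequality $\norm{Y - \E{Y}}_{\psi_1} \leq \norm{Y}_{\psi_1} + \norm{\E{Y}}_{\psi_1}$. The deterministic constant $\E{Y}$ satisfies $\norm{\E{Y}}_{\psi_1} = \E{Y}/\ln 2$, and applying Jensen's inequality to the defining relation $\E{\exp\of{Y/\norm{Y}_{\psi_1}}} \leq 2$ yields $\E{Y} \leq \ln 2 \cdot \norm{Y}_{\psi_1}$, hence $\norm{\E{Y}}_{\psi_1} \leq \norm{Y}_{\psi_1}$. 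Combining gives $\norm{Y - \E{Y}}_{\psi_1} \leq 2\norm{Y}_{\psi_1} \leq 8(k-1)$.

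Finally, the moment bounds follow immediately from \cref{lem:subexponential}: with the universal constant $C$ there, $\E{Y^m} \leq \E{\abs{Y}^m} \leq \of{Cm\norm{Y}_{\psi_1}}^m \leq \of{4Cm(k-1)}^m$, and likewise $\E{\of{Y - \E{Y}}^m} \leq \E{\abs{Y - \E{Y}}^m} \leq \of{8Cm(k-1)}^m$ using $x^m \leq \abs{x}^m$, so $C_1 = 4C$ and $C_2 = 8C$ work. I expect the only real obstacle to be the uniform-in-$k$ verification of $\of{1 - \frac1{2m}}^m \geq \frac12$; everything else is a direct substitution or a standard property of the $\psi_1$ norm.
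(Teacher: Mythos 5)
Your proposal is correct and follows essentially the same route as the paper: apply \cref{intro_thm:mgf} at $t=n/(2(k-1))$ to get $\E{\exp\of{2n\V{n,k,P}/(4(k-1))}}\leq\of{1-\tfrac1{2(k-1)}}^{-(k-1)}\leq 2$, deduce the $\psi_1$-norm bounds via the triangle inequality and Jensen (the paper's ``convexity of norms''), and conclude with \cref{lem:subexponential}. The only difference is that you spell out the verification of $\of{1-\tfrac1{2m}}^m\geq\tfrac12$ and the centering step, which the paper states without detail.
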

\begin{proof}
  \Cref{intro_thm:mgf} implies for all $n$, $k$, and $P$ that
  $\E{\exp\of{\frac1{4(k-1)} \cdot 2n\V{n,k,P}}} \leq \of{1-\frac1{2(k-1)}}^{-(k-1)}
  \leq 2$, so by \cref{lem:subexponential} we have that
  $\norm{2n\V{n,k,P}}_{\psi_1}\leq 4(k-1)$.
  By the triangle inequality and convexity of norms, this lets us bound
  the norm of the centered random variable as
  $\norm{2n\V{n,k,P}-\E{2n\V{n,k,P}}}_{\psi_1}\leq
  2\norm{2n\V{n,k,P}}_{\psi_1}\leq 8(k-1)$.
\end{proof}

Our asymptotic results rely on Wilks' theorem \cite{wilks_largesample_1938}
on the asymptotic behavior of the likelihood ratio test, which for fixed $k$
and $P$ implies that the random variable $2n\V{n,k,P}$ converges in distribution to the
chi-squared distribution with $k-1$ degrees of freedom as $n$ goes to
infinity (see also \cite[Theorem 4.2]{csi_shi_information_2005}).
Though in general convergence in distribution does not imply
convergence of moments or of the moment generating function
\cite{billingsley_convergence_1999}, it turns out that the bounds
from \cref{cor:moments} are strong enough for convergence in distribution to
imply convergence of the moments.
\begin{thm}\label{thm:asymptotic}
  Let $k\geq 2$ be an integer and $P=(p_1,\dots,p_k)$ be a probability
  distribution over a finite alphabet of size $k$ with $p_i\neq 0$ for
  every $i\in\set{1,\dots,k}$. Then for every $m\geq 1$ we have
  \begin{align*}
    \lim_{n\to\infty} \E{(2n\V{n,k,P})^m}
    &= \E{\of{\chi^2_{k-1}}^m}
    =2^m\frac{\Gamma\of{m+\frac{k-1}2}}{\Gamma\of{\frac{k-1}2}}\\
    \lim_{n\to\infty} \E{\of{2n\V{n,k,P}-\E{2n\V{n,k,P}}}^m}
    &= \E{\of{\chi^2_{k-1}-\E{\chi^2_{k-1}}}^m}
  \end{align*}
  and for every $s\in[0,1/2)$ we have
  \begin{align*}
    \lim_{n\to\infty} \E{\exp\of{s\cdot 2n\V{n,k,P}}}
    &= \E{\exp\of{s\cdot \chi^2_{k-1}}}=\of{1-2s}^{-(k-1)/2}\\
    \lim_{n\to\infty} \E{\exp\of{s\cdot \of{2n\V{n,k,P}-\E{2n\V{n,k,P}}}}}
    &= \E{\exp\of{s\cdot \of{\chi^2_{k-1}-\E{\chi^2_{k-1}}}}}
    =e^{-(k-1)s}(1-2s)^{-(k-1)/2}
  \end{align*}
\end{thm}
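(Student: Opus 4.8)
The plan is to upgrade the distributional convergence supplied by Wilks' theorem (namely $2n\V{n,k,P}\Rightarrow\chi^2_{k-1}$ for fixed $k,P$) into convergence of moments and of the moment generating function, the missing ingredient being \emph{uniform integrability}, which the finite-sample bounds of \cref{cor:moments,intro_thm:mgf} provide. The organizing principle is the standard fact (see e.g.~\cite{billingsley_convergence_1999}) that if $Z_n\Rightarrow Z$ and $\set{Z_n}$ is uniformly integrable then $\E{Z_n}\to\E{Z}$; applied through the continuous mapping theorem to $Z_n=g(Y_n)$ with $Y_n\defeq 2n\V{n,k,P}$, it reduces everything to verifying uniform integrability of $\set{g(Y_n)}$ for the relevant test functions $g$.

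For the raw moments take $g(x)=x^m$. \Cref{cor:moments} gives $\sup_n\E{Y_n^{m+1}}\leq\of{C_1(m+1)(k-1)}^{m+1}<\infty$, so $\set{Y_n^m}$ is bounded in $L^{(m+1)/m}$ with exponent strictly above $1$ and hence uniformly integrable; combined with $Y_n\Rightarrow\chi^2_{k-1}$ this yields $\E{Y_n^m}\to\E{(\chi^2_{k-1})^m}$, and the closed form $2^m\Gamma(m+\tfrac{k-1}2)/\Gamma(\tfrac{k-1}2)$ is the textbook moment formula for the gamma/chi-squared law. Specializing to $m=1$ records the fact $\mu_n\defeq\E{Y_n}\to k-1=\E{\chi^2_{k-1}}$, which I will reuse. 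The central moments then require no new limit theorem: expanding the fixed finite sum $(Y_n-\mu_n)^m=\sum_{j=0}^m\binom mj Y_n^j(-\mu_n)^{m-j}$ and passing to the limit termwise (each $\E{Y_n^j}$ converges and $\mu_n\to k-1$) gives $\sum_{j=0}^m\binom mj\E{(\chi^2_{k-1})^j}(-(k-1))^{m-j}=\E{(\chi^2_{k-1}-\E{\chi^2_{k-1}})^m}$.

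For the moment generating function the test function $g(x)=e^{sx}$ is unbounded, so (as the introduction warns) distributional convergence alone is insufficient and the crux is uniform integrability of $\set{e^{sY_n}}$. Here the key is that \cref{intro_thm:mgf} leaves room to spare: fixing $s\in[0,1/2)$ and choosing any $s'\in(s,1/2)$, the bound with $t=2ns'<n$ gives $\sup_n\E{e^{s'Y_n}}\leq(1-2s')^{-(k-1)/2}<\infty$. On the event $\set{e^{sY_n}>K}$ one has $Y_n>(\log K)/s$ and therefore $e^{sY_n}=e^{s'Y_n}e^{-(s'-s)Y_n}\leq K^{-(s'-s)/s}e^{s'Y_n}$, so $\E{e^{sY_n}\mathbf{1}[e^{sY_n}>K]}\leq K^{-(s'-s)/s}\sup_n\E{e^{s'Y_n}}\to 0$ uniformly in $n$ as $K\to\infty$, which is exactly uniform integrability. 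This yields $\E{e^{sY_n}}\to\E{e^{s\chi^2_{k-1}}}=(1-2s)^{-(k-1)/2}$, and the centered version follows by factoring out the deterministic shift: $\E{\exp\of{s(Y_n-\mu_n)}}=e^{-s\mu_n}\E{e^{sY_n}}\to e^{-(k-1)s}(1-2s)^{-(k-1)/2}$ since $\mu_n\to k-1$.

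I expect the only genuine obstacle to be this last uniform-integrability step for the exponential test function: unlike the polynomial case, where a single higher moment bound from \cref{cor:moments} suffices mechanically, here one must actively exploit the slack between $s$ and $1/2$ so that \cref{intro_thm:mgf} still controls a strictly larger exponential moment $\E{e^{s'Y_n}}$. Everything else — the $L^p$-boundedness criterion for uniform integrability, the binomial reduction of central to raw moments, and the gamma identities — is routine.
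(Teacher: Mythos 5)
Your proposal is correct, and its backbone---Wilks' theorem upgraded to convergence of expectations via uniform integrability, with the integrability supplied by \cref{cor:moments} and \cref{intro_thm:mgf}---is exactly the paper's. Writing $Y_n$ for $2n\V{n,k,P}$ and $\mu_n$ for its mean, your treatment of the raw moments and the uncentered MGF matches the paper's up to cosmetic choices: the paper checks $\sup_n\E{Y_n^{2m}}<\infty$ rather than $\sup_n\E{Y_n^{m+1}}<\infty$, and for the MGF it fixes the particular intermediate exponent $s'=1/4+s/2$ and invokes the $L^{1+\alpha}$-boundedness criterion where you rederive uniform integrability by hand for an arbitrary $s'\in(s,1/2)$. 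Where you genuinely diverge is on the two centered claims. The paper reruns the full machinery there: Slutsky's theorem plus the continuous mapping theorem give $\of{Y_n-\mu_n}^m\Rightarrow\of{\chi^2_{k-1}-(k-1)}^m$, controlled by the centered moment bound of \cref{cor:moments}, and the centered MGF is handled by dominating $\exp\of{c\of{Y_n-\mu_n}}$ by $\exp\of{cY_n}$ using $\V{n,k,P}\geq 0$. Your alternatives---expanding $(Y_n-\mu_n)^m$ binomially into raw moments and factoring $\E{\exp\of{s\of{Y_n-\mu_n}}}=e^{-s\mu_n}\E{\exp\of{sY_n}}$---are more elementary, need no further limit theorems, and are perfectly valid; the binomial expansion does require $m$ to be an integer, but that is the only reading under which signed central moments make sense, and the paper's continuous-mapping route has the same implicit restriction. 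Two small slips that do not affect the argument: the finite-sample bound from \cref{intro_thm:mgf} gives $\sup_n\E{e^{s'Y_n}}\leq(1-2s')^{-(k-1)}$, not $(1-2s')^{-(k-1)/2}$ (you quoted the asymptotic value, but only finiteness is used), and your truncation estimate divides by $s$, so the trivial case $s=0$ should be set aside separately as the paper does.
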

\begin{rmrk}
  \cite{mar_jia_tan_now_wei_concentration_2019} prove the one-sided
  lower bound that $\liminf_{n\to\infty} \Var\of{2n\V{n,k,P}}
  \geq \Var\of{\chi^2_{k-1}}$, which is a special case of the
  second equality above.
\end{rmrk}
\begin{proof}
  Given a sequence of random variables $\of{X_n}_{n\in\mathbb N}$
  which convergence in distribution to a random variable $X$,
  a sufficient condition for $\lim_{n\to\infty} \E{X_n} = \E{X}$
  is that $\sup_n \E{\abs{X_n}^{1+\alpha}}<\infty$ for some
  $\alpha > 0$ (see e.g.~\cite{billingsley_convergence_1999}).

  Wilks' theorem \cite{wilks_largesample_1938} shows that
  $2n\V{n,k,P}$ converges in distribution to $\chi^2_{k-1}$,
  and thus the continuous mapping theorem implies that
  $\of{2n\V{n,k,P}}^m$ converges in distribution to
  $\of{\chi^2_{k-1}}^m$ for every $m\geq 1$.
  \Cref{cor:moments} implies
  $\sup_n \E{\abs{\of{2n\V{n,k,P}}^m}^2} \leq (Cm(k-1))^{2m}<\infty$,
  which establishes the first claim.
  In particular, for $m=1$ we have $\lim_{n\to\infty}\E{2n\V{n,k,P}}
  =\E{\chi^2_{k-1}}$, so Slutsky's theorem implies that
  $2n\V{n,k,P}-\E{2n\V{n,k,P}}$ converges in distribution to
  $\chi^2_{k-1}-\E{\chi^2_{k-1}}$.
  Again by the continuous mapping theorem we thus have that
  $\of{2n\V{n,k,P}-\E{2n\V{n,k,P}}}^m$ converges in distribution
  to $\of{\chi^2_{k-1}-\E{\chi^2_{k-1}}}^m$, so since
  \Cref{cor:moments} implies
  $\sup_n \E{\abs{\of{2n\V{n,k,P}-\E{2n\V{n,k,P}}}^m}^2}
  \leq (Cm(k-1))^{2m}<\infty$, we also get the second claim.

  For the moment generating function claims, first note that
  they are trivial for $s=0$, as both sides are always $1$,
  and for $s\in(0,1/2)$ we have $1/2>1/4+s/2>s$. Now, since
  the continuous mapping theorem implies $\exp(s\cdot 2n\V{n,k,P})$
  converges in distribution to $\exp\of{s\cdot \chi^2_{k-1}}$,
  and \cref{intro_thm:mgf} implies
  $\sup_n \E{\abs{\exp\of{(1/4+s/2)\cdot 2n\V{n,k,P}}}}
  \leq \of{1/2-s}^{k-1}<\infty$, we get the third claim.
  Finally, for the last claim, we again have that
  $\exp\of{s\cdot\of{2n\V{n,k,P}-\E{2n\V{n,k,P}}}}$ converges in
  distribution to $\exp\of{s\cdot\of{\chi^2_{k-1}-\E{\chi^2_{k-1}}}}$
  by the continuous mapping theorem, and since $\V{n,k,P}\geq 0$
  we have
  $\exp\of{(1/4+s/2)\cdot\of{2n\V{n,k,P}-\E{2n\V{n,k,P}}}}
  \leq \exp\of{(1/4+s/2)\cdot 2n\V{n,k,P}}$ and we conclude
  as for the third claim.
\end{proof}

\section{Discussion}
\label{sec:discussion}

In this section we compare our bounds to existing results in the
literature and discuss possible directions for future work.

\subsection{Moment generating function bounds}

To the best of our knowledge, this work is the first to explicitly
consider the moment generating function of the empirical divergence, and
existing tail bounds do not give finite bounds on $\sup_n
\E{\exp\of{x\cdot n\V{n,k,P}}}=\sup_n\int_0^\infty
\PR{n\V{n,k,P}>\frac{\log t}x}\,dt$ for any $k\geq 3$ or constant $x>0$.
Thus, we focus on comparing our finite sample bound (\cref{intro_thm:mgf})
to the asymptotic one (\cref{thm:asymptotic}).

In \cref{thm:asymptotic} we showed for all $x\in[0,1)$ that
$\lim_{n\to\infty}\E{\exp\of{x\cdot n\V{n,k,P}}}=\of{1-x}^{-(k-1)/2}$,
whereas our finite sample bound of \cref{intro_thm:mgf} instead gave the
upper bound $\E{\exp\of{x\cdot n\V{n,k,P}}}\leq \of{1-x}^{-(k-1)}$, which is
quadratically worse. This loss arises from our binomial bound from
\cref{lem:binmgf} of $(1-x)^{-1}$ for the case $k=2$, where the correct
asymptotic bound is $\of{1-x}^{-1/2}$.  Unfortunately, it is \emph{not}
the case that this latter asymptotic bound holds for all $n$, $p$, and
$0\leq x<1$: indeed, this is violated even for $(n,p,x)=(2,1/2,1/2)$.
Nevertheless, we conjecture that \cref{lem:binmgf} can be improved to
something closer to the asymptotic bound:
\begin{conj}\label{conj:betterbinarymgf}
  The function
  \[
    f(x) = \frac{2}{\sqrt{1 - x}} - 1
  \]
  is a sample-independent MGF bound for the binomial KL.
\end{conj}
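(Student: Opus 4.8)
The plan is to bypass the lossy log-concavity step in the proof of \cref{lem:binmgf} and instead analyze the exact binomial moment generating function
\begin{align*}
  M_n(p,x) &\defeq \Eexp{nx\cdot \KL\diver{\of{\B_{n,p}/n, 1-\B_{n,p}/n}}{\of{p,1-p}}} \\
  &= \sum_{i=0}^n \PR{\B_{n,p}=i}^{1-x}\PR{\B_{n,i/n}=i}^x
\end{align*}
directly. The reason the existing argument cannot merely be refined is quantitative: the pointwise log-concavity bound yields $G_n(0,x)=\sum_{i=0}^n\of{\prod_{j=0}^{i-1}(1-j/n)}x^i$, whose coefficients increase to $1$ as $n\to\infty$, so that $\sup_n G_n(0,x)=1/(1-x)$, which already \emph{exceeds} the conjectured $\frac{2}{\sqrt{1-x}}-1$ for $x$ near $1$. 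Hence any proof of \cref{conj:betterbinarymgf} must genuinely exploit the decay of the self-probability $\PR{\B_{n,i/n}=i}\sim(2\pi\,i(1-i/n))^{-1/2}$, which is exactly the source of the square-root improvement over \cref{lem:binmgf}.

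Since $M_n(p,x)=M_n(1-p,x)$ and the exact computations for $n=1$ (giving $\max_p M_n=2^x$) and $n=2$ (giving $M_n(\tfrac12,x)=2^{2x-1}+\tfrac12$) are consistent with $p=1/2$ being the worst case, I would first try to show that $p=1/2$ is the maximizer for every $n$ and $x$, reducing the conjecture to a one-parameter family; failing a clean proof of this, one instead aims for a bound uniform in $p$. In either case the heart of the matter is a Laplace-type estimate: approximating $\PR{\B_{n,p}=i}$ by a Gaussian of width $\sqrt{np(1-p)}$ about $np$ and the self-probability by its bulk value $(2\pi np(1-p))^{-1/2}$, the product has effective standard deviation inflated by $(1-x)^{-1/2}$, so that summing over $i$ produces $\approx (1-x)^{-1/2}$ independently of $n$ and $p$. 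This recovers precisely the asymptotic bound underlying \cref{thm:asymptotic}; the extra factor $2$ and the additive $-1$ in the conjecture are the slack needed to absorb the discrete-sum and boundary corrections at finite $n$.

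To make this rigorous I would (i) use nonasymptotic Stirling bounds to obtain one-sided inequalities on both $\PR{\B_{n,p}=i}$ and the self-probability $\PR{\B_{n,i/n}=i}$ that are tight up to explicit elementary functions and valid for all $i$; (ii) bound the resulting sum by the corresponding Gaussian integral with a controlled sum-to-integral error (via Euler--Maclaurin or a direct midpoint/monotonicity comparison), which is the step that forces the constant above the asymptotic value $1$; and (iii) dispose of the boundary indices $i\in\set{0,n}$ (each contributing at most $1$, and together suggesting the additive ``$2\cdot(\cdots)-1$'' shape) and of the finitely many small $n$ by direct computation, just as the naive bound $(1-x)^{-1/2}$ was already ruled out at $(n,p,x)=(2,1/2,1/2)$.

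I expect the main obstacle to be achieving the \emph{sharp} constant $2$ uniformly over all $n$ and all $x\in[0,1)$ at once, because the two extremes pull in opposite directions. For small $n$ the sum is dominated by a few boundary terms where $(1-x)^{-1/2}$ fails outright (forcing the factor $2$), whereas as $x\to1$ the sum is dominated by the bulk, where the Gaussian approximation and the sum-to-integral passage must be controlled to \emph{subleading} order to guarantee the accumulated constant does not drift above $2$. Interpolating cleanly between these regimes, while simultaneously handling every $p$ rather than only $p=1/2$, is the crux; it is plausible that pinning the constant exactly will require an exact identity for $M_n(p,x)$ in the spirit of \cref{lem:constant-p,lem:bound-zero-p}, rather than the purely analytic Gaussian route.
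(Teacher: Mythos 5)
This statement is a \emph{conjecture}, and the paper does not prove it unconditionally; the only argument it offers is a reduction showing that \cref{conj:betterbinarymgf} follows from \cref{conj:halfchisquared}. Your proposal likewise does not prove it: it is a program, not a proof. Steps (i)--(iii) --- the nonasymptotic Stirling bounds, the sum-to-integral comparison with controlled error, and the treatment of boundary indices and small $n$ --- are never executed, no explicit error terms are exhibited, and you yourself concede that the crux (keeping the accumulated constant at exactly $2$ uniformly over all $n$, all $p$, and all $x\in[0,1)$, especially as $x\to 1$ where the bulk must be controlled to subleading order) is unresolved. That concession is exactly the missing content; as written there is no step one could check, so the gap is the entire quantitative heart of the argument. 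Your preliminary observations are sound and worth keeping: the identity $1/(1-x)=u^2\geq 2u-1=2/\sqrt{1-x}-1$ for $u=1/\sqrt{1-x}$ correctly shows that the log-concavity route of \cref{lem:binmgf} can never yield the conjectured bound, the exact values $2^x$ for $n=1$ and $2^{2x-1}+\tfrac12$ for $n=2$, $p=\tfrac12$ are correct, and the intuition that the $\sqrt{\phantom{x}}$ must come from the decay of the self-probability $\PR{\B_{n,i/n}=i}$ is the right one.

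One structural point of comparison: you attribute the ``$2\cdot(\cdots)-1$'' shape of the conjectured bound to the two boundary indices $i\in\set{0,n}$, but the paper's conditional derivation locates it elsewhere. Writing $\KL\diverg{(p,1-p)}{(q,1-q)}=\BKL_{>}\diver{p}{q}+\BKL_{>}\diver{1-p}{1-q}$ and noting that for each sample point at least one of the two exponentiated branches equals $1$, one gets the exact identity $xy=x+y-1$ for the product, hence an MGF equal to the sum of the two one-sided MGFs minus $1$; the conjectured $2/\sqrt{1-x}-1$ is then two applications of the one-sided bound $1/\sqrt{1-x}$ of \cref{conj:halfchisquared}. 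If you pursue the Laplace-type route, it would be more natural (and strictly easier) to aim it at \cref{conj:halfchisquared}, where the target constant is $1$ rather than a delicately split $2$, and let the algebraic identity supply the factor $2$ and the $-1$ for free.
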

\begin{rmrk}
$1/\sqrt{1-x}\leq 2/\sqrt{1-x}-1\leq 1/(1-x)$
for all $x\in [0,1)$.
\end{rmrk}
\Cref{conj:betterbinarymgf} would follow from the following more natural
conjecture, which looks at a single branch of the KL divergence and is
supported by numerical evidence:
\begin{conj}\label{conj:halfchisquared}
  Letting
  \[
    \BKL_{>}\diver{p}{q}
    \defeq
    \begin{cases}
      0&p\leq q\\
      \KL\diverg[\Big]{\of{p,1-p}}{\of{q,1-q}} &p>q
    \end{cases}
  \]
  it holds for every positive integer $n$, real $t\in [0,n)$, and $p\in
  [0,1]$ that
  \[
    \E{\exp\of[\Big]{t\cdot \BKL_{>}\diver{\B/n}{p}}} \leq \frac1{\sqrt{1-t/n}}
  \]
  where $\B \sim \Binom(n, p)$.
\end{conj}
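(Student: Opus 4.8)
The plan is to reduce the claimed moment generating function inequality to a family of moment bounds and then attack those. Set $Y = n\,\BKL_{>}\diver{\B/n}{p} \ge 0$ and $M_{>}(x) = \E{\exp\of{xY}}$, writing $x = t/n \in [0,1)$. Since $M_{>}(x) = \sum_{m\ge 0} \frac{x^m}{m!}\E{Y^m}$ has nonnegative coefficients and $(1-x)^{-1/2} = \sum_{m\ge 0}\binom{2m}{m}\frac{x^m}{4^m}$, it suffices to prove that for every $n$, every $p\in[0,1]$, and every $m\ge 1$,
\[
  \E{Y^m} \;\le\; m!\binom{2m}{m}4^{-m} \;=\; \frac{(2m-1)!!}{2^m} \;=\; \E{\of{\chi^2_1/2}^m}.
\]
In other words, the $n$-scaled one-sided empirical divergence should have all moments no larger than those of a $\chi^2_1/2$ (equivalently a $\Gamma(1/2,1)$) variable. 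This matches the Wilks' limit, under which $Y \to \tfrac12 Z^2\,\mathbf{1}[Z>0]$ for $Z\sim N(0,1)$; note that $\E{Y^m}$ then tends to $\tfrac12\E{(\chi^2_1/2)^m}$, so the target already carries a factor of two of slack over the limiting value.

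To prove the moment bound I would use the layer-cake formula $\E{Y^m} = \int_0^\infty m\,y^{m-1}\PR{Y\ge y}\dint y$ together with $\set{Y\ge y} = \set{\B/n \ge a(y)}$, where for $y>0$ the threshold $a(y) > p$ is the unique upper-branch solution of $n\,\KL\diver{\of{a,1-a}}{\of{p,1-p}} = y$. The method-of-types (Chernoff) bound gives $\PR{\B\ge na} \le \exp\of{-n\,\KL\diver{\of{a,1-a}}{\of{p,1-p}}} = e^{-y}$, which already yields $\E{Y^m} \le \int_0^\infty m\,y^{m-1}e^{-y}\dint y = m!$; this merely reproduces \cref{lem:binmgf}, since the moment sequence $m!$ corresponds to the MGF $(1-x)^{-1}$. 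To reach the sharper target $\frac{(2m-1)!!}{2^m}$ --- smaller than $m!$ by exactly the factor $\binom{2m}{m}4^{-m}\asymp 1/\sqrt{\pi m}$ that is responsible for the quadratic gap of \cref{sec:discussion} --- the tail estimate must be improved by a polynomial prefactor, i.e.\ one must establish a uniform finite-$n$ bound of the form $\PR{\B\ge na}\le g(y)\,e^{-y}$ with $g(y)=O(1/\sqrt{y})$ calibrated so that $\int_0^\infty m\,y^{m-1}g(y)e^{-y}\dint y \le \frac{(2m-1)!!}{2^m}$ for all $m$.

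The hard part is establishing such a sharp tail bound with the correct constant, uniformly in $n$ and $p$: the needed refinement is of Bahadur--Rao type, but the classical Bahadur--Rao expansion is only asymptotic in $n$, whereas here it is required for every finite $n$ (already at $n=1$ one checks the target is attained up to a factor of only $\sqrt 2$, so essentially no slack may be wasted). Any successful argument must genuinely exploit the one-sidedness. Indeed, there is an exact pointwise identity $M_{>}(x) + M_{<}(x) = \E{\exp\of{x\cdot n\V{n,2,(p,1-p)}}} + 1$, where $M_{<}$ is the lower-tail twin of $M_{>}$ (the two summands have disjoint support and overlap only in the constant term); this is precisely what shows \cref{conj:betterbinarymgf} follows from the present conjecture, and it locates the factor-of-two slack. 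It also shows why a crude symmetrization cannot work: even in the symmetric case $p=1/2$, where $M_{>}(x) = \tfrac12\of{\E{\exp\of{x\cdot n\V{n,2,(p,1-p)}}}+1}$, substituting the available two-sided bound $(1-x)^{-1}$ gives $\tfrac12\of{(1-x)^{-1}+1}$, which already exceeds $(1-x)^{-1/2}$, so the true sharper behavior of the two-sided MGF is unavoidable. As an alternative to the tail-integral route, I would attempt a direct coefficient computation: apply the log-concavity of $q\mapsto\PR{\B_{n,q}=i}$ (as in \cref{lem:binmgf}) only to the active terms $i>np$, reducing the claim to a partial-sum analogue of the Stirling-number identity of \cref{lem:bound-zero-p}. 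The obstruction here is that the clean telescoping of that lemma relied on summing over the entire range $0\le i\le n$, and the one-sided cutoff reintroduces exactly the $p$-dependence that \cref{lem:constant-p} had removed.
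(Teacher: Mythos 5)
The statement you are addressing is \cref{conj:halfchisquared}, which the paper leaves as a \emph{conjecture}: it offers no proof, only numerical evidence and a pointer to Zubkov--Serov and Harremo\"es as potentially useful tools. Your proposal does not close it either, and you say as much. The parts you do carry out are correct: the identity $M_>(x)+M_<(x)=\E{\exp\of{x\cdot n\V{n,2,(p,1-p)}}}+1$ is exactly the $xy=x+y-1$ computation the paper uses to derive \cref{conj:betterbinarymgf} from \cref{conj:halfchisquared}; the coefficientwise reduction to $\E{Y^m}\le (2m)!/(4^m m!)$ is a legitimate (though a priori strictly stronger) sufficient condition, consistent with the Wilks limit and with the $n=1$ case; and your observation that symmetrizing the known two-sided bound $(1-x)^{-1}$ cannot reach $(1-x)^{-1/2}$ is right, since $\tfrac12\of{(1-x)^{-1}+1}-(1-x)^{-1/2}=(1-\sqrt{1-x})^2/(2(1-x))\ge 0$. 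But the step that would actually prove anything --- the uniform finite-$n$ tail refinement --- is explicitly deferred, so this is a research plan, not a proof.

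Moreover, the specific intermediate target you propose is false as stated. You ask for $\PR{\B\ge na}\le g(y)e^{-y}$ uniformly in $n$ and $p$ with $g(y)=O(1/\sqrt{y})$, where $y=n\KL\diver{\of{a,1-a}}{\of{p,1-p}}$. Take $a=1$: then $\PR{\B\ge n}=p^n$ while $y=n\log(1/p)$, so the Chernoff bound is an exact equality, $\PR{\B\ge n}=e^{-y}$, forcing $g(y)\ge 1$ at $y=n\log(1/p)$; letting $p\to 0$ with $n$ fixed makes this $y$ arbitrarily large, so no prefactor decaying to zero can work pointwise. The Bahadur--Rao $1/\sqrt{y}$ savings simply disappears at the boundary of the lattice, where all the tail mass sits on a single atom. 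Any successful version of your layer-cake route must therefore handle the extreme atoms (tiny probability, but no polynomial savings over Chernoff) separately from the bulk, or else work directly with the moments or Taylor coefficients as in your alternative suggestion --- where, as you correctly identify, the obstruction is that the telescoping in \cref{lem:constant-p} and the Stirling-number identity in \cref{lem:bound-zero-p} both rely on summing over the entire range $0\le i\le n$, and a one-sided cutoff reintroduces the $p$-dependence.
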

\begin{rmrk}
  We believe the results (or techniques) of Zubkov and Serov
  \cite{zub_ser_complete_2013} and Harremo\"es
  \cite{harremoes_bounds_2017} strengthening Hoeffding's inequality may
  be of use in proving these conjectures.
\end{rmrk}

\begin{proof}[Proof of \cref{conj:betterbinarymgf} given \cref{conj:halfchisquared}]
  We have that
  \[
    \KL\diverg[\Big]{(p, 1-p)}{(q, 1-q)}
    = \BKL_{>}\diver pq + \BKL_{>}\diver{1-p}{1-q}
  \]
  so for every $i\in\set{0,1,\dotsc,n}$
  \[
    \exp\of[\Big]{t\cdot \KL\diverg[\Big]{(i/n,1-i/n)}{(p,1-p)}}
    =
    \exp\of[\Big]{t\cdot \BKL_{>}\diver{i/n}{p}}
    \cdot \exp\of[\Big]{t\cdot \BKL_{>}\diver{1-i/n}{1-p}}.
  \]
  Letting $x = \exp\of[\Big]{t\cdot \BKL_{>}\diver{i/n}{p}}$ and
  $y = \exp\of[\Big]{t\cdot \BKL_{>}\diver{1-i/n}{1-p}}$, we have
  that at least one of $x$ and $y$ is equal to $1$, so that
  \[
    xy = \of[\big]{1 + (x-1)}\of[\big]{1 + (y-1)} = 1 + (x-1) + (y-1)
    + (x-1)(y-1)
    = x + y - 1,
  \]
  and thus by taking expectations over $i = \B$ for $\B\sim \Binom(n,
  p)$, we get
  \[
    \Eexp{t\cdot \BKL\diver{\B/n}{p}}
    = \Eexp{t\cdot \BKL_{>}\diver{\B/n}{p}} +
      \Eexp{t\cdot \BKL_{>}\diver{1-\B/n}{1-p}}
      - 1.
    \]
  We conclude by bounding both terms using \cref{conj:halfchisquared},
  since $n-\B$ is distributed as $\Binom(n, 1-p)$.
\end{proof}

\subsection{Moment bounds}

The moments of $\V{n,k,P}$ have seen some study in the literature.  Most
notably, Paninski \cite{paninski_estimation_2003} showed by comparison
to the $\chi^2$-statistic that $\E{\V{n,k,P}} \leq \log\of{1 +
\frac{k-1}n} \leq \frac{k-1}n$. In the reverse direction,
\cite{jia_ven_han_wei_maximum_2017} showed that if $n\geq 15k$ then for
the uniform distribution it holds that $\E{\V{n,k,U_k}} \geq
\frac{k-1}{2n}$, complementing the asymptotic result that
$\lim_{n\to\infty}\E{n\V{n,k,U_k}} =\frac{k-1}{2}$, which follows from
\cref{thm:asymptotic} (and can also be derived from
\cite{mar_jia_tan_now_wei_concentration_2019}). For higher moments,
\cite{mar_jia_tan_now_wei_concentration_2019} showed that
$\Var(\V{n,k,P})\leq Ck/n^2$ for some constant $C$, and asymptotically
that $\liminf_{n\to\infty}\Var(2n\V{n,k,P})\geq\Var(\chi^2_{k-1})
=2(k-1)$. To the best of our knowledge, no bounds on the higher
moments have appeared in the literature.

In \cref{cor:moments} we showed for every $m\geq 1$ that
$\E{\of{2n\V{n,k,P}}^m}\leq \of{Cm(k-1)}^m$ for some universal constant
$C>0$, and we showed in \cref{thm:asymptotic} the asymptotic equality
$\lim_{n\to\infty}\E{\of{2n\V{n,k,P}}^m}
=2^m\frac{\Gamma\of{m+\frac{k-1}2}}{\Gamma\of{\frac{k-1}2}} =\of{C'
m(k-1)}^m$ where $C'$ is bounded in a constant range.  Thus, our
finite-sample bound is asymptotically optimal up to the universal
constant $C$.

However, the situation is quite different for the central moments
$\E{\of{2n\V{n,k,P}-\E{2n\V{n,k,P}}}^m}$, where we again showed the
finite sample bound $\of{Cm(k-1)}^m$, but asymptotically from
\cref{thm:asymptotic} the bound is $\of{C'm(k-1)}^{\floor{m/2}}$ for
$m\geq 2$ and some $C'$ in a constant range. For $m=2$,
\cite{mar_jia_tan_now_wei_concentration_2019} were able to achieve this
bound up to constant factors, but it is an intriguing open question to
get finite sample central moment bounds with the asymptotically correct
power for $m>2$.

\subsection{Tail bound}

To understand our tail bound (\cref{intro_thm:tail}), we compare our
result to existing bounds in the literature.  Antos and Kontoyiannis
\cite{ant_kon_convergence_2001} used McDiarmid's bounded differences
inequality \cite{mcdiarmid_method_1989} to give a concentration bound
for the empirical entropy, which in the case of the uniform distribution
implies the bound
\[ \PR{\abs{\V{n,k,U_k} - \E{\V{n,k,U_k}}} \geq \eps}
\leq 2e^{-n\eps^2/(2\log^2 n)}. \]
This bound has the advantage of providing subgaussian concentration
around the expectation, but for the case of small $\eps < 1$ it is
preferable to have a bound with linear dependence on $\eps$.
Unfortunately, existing tail bounds which decay like $e^{-n\eps}$ are
not, in the common regime of parameters where $n\gg k$, meaningful for $\eps$
close to $\E{\V{n,k,P}}\leq (k-1)/n$. For example, the method of types
\cite{csiszar_method_1998} is used to prove the standard bound
\begin{equation}
  \PR{\V{n,k,P} > \eps} \leq e^{-n\eps}\cdot\binom{n+k-1}{k-1},
  \label{eqn:mot}
\end{equation}
which is commonly used in proofs of Sanov's theorem (see
e.g.~\cite{cov_tho_elements_2006}).
However, this bound is meaningful only for  $\eps > \frac1n \cdot
\log\binom{n+k-1}{k-1} \geq \frac{k - 1}{n} \cdot \log\of{1 +
\frac{n}{k-1}}$, which is off by a factor of order
$\log\of{1+\frac{n}{k-1}}$. A recent bound due to Mardia et al.\ 
\cite{mar_jia_tan_now_wei_concentration_2019} improved on the method of
types bound for all settings of $k$ and $n$, but for $3\leq k \leq
\frac{e^2}{2\pi}\cdot n$ still requires $\eps > \frac{k}{n}\cdot
\log\of{\sqrt{\frac{e^3n}{2\pi k}}} > \frac{k - 1}{n}\cdot \log\of{1 +
\frac{n-1}{k}}/2$, which again has dependence on $\log\of{1 +
\frac{n-1}{k}}$.

Thus, if $k\leq n$, then our bound is meaningful for $\eps$
smaller than what is needed for the method of types bound or the bound
of \cite{mar_jia_tan_now_wei_concentration_2019} by a factor of order
$\log(n/k)$, which for $k$ as large as $n^{0.99}$ is still $\log(n)$,
and for $k$ as large as $n/\log n$ is of order $\log\log n$.  However,
\cref{intro_thm:tail} has slightly worse dependence on $\eps$ than the
other bounds, so for example it is better than the method of types bound
if and only if
\begin{equation}
  \label{eqn:vsmot}
  \frac{k-1}{n} < \eps < \frac{k-1}{n} \cdot
  \of{\frac 1e\sqrt[k-1]{\binom{n+k-1}{k-1}}}.
\end{equation}
In particular, when $n \geq e(k-1)$, our bound is better for $\eps$ up
to order $\frac{n}{k-1}$ times larger than $\frac{k-1}{n}$.
However, we can also see that our bound can be better only when
$\sqrt[k-1]{\binom{n+k-1}{k-1}}\geq e$, which asymptotically is
equivalent to $k - 1 \leq C n$, where $C\approx 1.84$ is the solution
to the equation $(1+C)/C \cdot H(C/(1+C))=1$ for $H$ the binary entropy
function in nats. From a finite-sample perspective, note that the condition
is always satisfied in the
standard setting of parameters where $n \geq k$, that is, the number
of samples is larger than the size of the alphabet. In this regime, we
can also compare to the ``interpretable'' upper bound of \cite[Theorem
3]{mar_jia_tan_now_wei_concentration_2019}, to see that
\cref{intro_thm:tail} is better if
\[
    \frac{k-1}{n} <  \eps < \frac{k-1}{n}\cdot
    \frac1e\of{\frac{6e^2}{\pi^{3/2}}\sqrt{\frac{e^3n}{2\pi k}}^k
  }^{1/(k-1)},
\]
so that in particular our bound is better for $\eps$ up to order
$\sqrt{\frac nk}^{1+1/(k-1)}\geq \sqrt{\frac nk}$ times larger than
$\frac{k-1}{n}$.

\section{Acknowledgements}

The author thanks Flavio du Pin Calmon and the anonymous reviewers for
their helpful comments and suggestions.


\end{document}